\theoremstyle{plain}
\newtheorem{theorem}{Theorem}
\newtheorem{lemma}[theorem]{Lemma}
\newtheorem{corollary}[theorem]{Corollary}
\newtheorem{proposition}[theorem]{Proposition}
\theoremstyle{nonumberplain}
\theoremstyle{plain}
\newtheorem{remark}{Remark}
\theoremstyle{plain}
\newtheorem{example}{Example}
\newtheorem{assum}{Assumption}
\theoremstyle{nonumberplain}
\newtheorem{proof}{Proof}
\newlength\fheight
\newlength\fwidth
\newcommand{\abs}[1]{\left| #1 \right|}
\newcommand{\norm}[1]{\| #1 \|}
\newcommand{\inn}[2]{\langle #1,#2 \rangle}
\newcommand{\lrbrace}[1]{\left\{ #1 \right\}}
\newcommand{\normi}[1]{{\left\vert\kern-0.25ex\left\vert\kern-0.25ex\left\vert #1 
		\right\vert\kern-0.25ex\right\vert\kern-0.25ex\right\vert}}
\newcommand{\inni}[2]{{\langle\kern-0.25ex\langle #1,#2
		\rangle\kern-0.25ex\rangle}}
\def\X{\mathcal{X}}
\def\d{ \mathrm{d} }								% integral d
\def\T{ \mathrm{T} }								% T - transpose
\def\nat{ \mathbb{N} }								% set of natural numbers
\def\real{ \mathbb{R} }								% real numbers
\def\Erw{ \mathbb{E} }
\def\Prob{ \mathbb{P} }
\def\X{\mathcal{X}}
\def\M{\mathbf{M}}
\def\ACV{\textnormal{ACV}}
\def\AD{\textnormal{AD}}
\def\Del{\textnormal{D}}
\newcommand\restr[2]{{% we make the whole thing an ordinary symbol
  \left.\kern-\nulldelimiterspace % automatically resize the bar with \right
  #1 % the function
  %\vphantom{\big|} % pretend it's a little taller at normal size
  \right|_{#2} % this is the delimiter
  }}
\DeclareMathOperator*{\argmax}{arg\,max}
\newcommand\fs@betterruled{%
	\def\@fs@cfont{\bfseries}\let\@fs@capt\floatc@ruled
	\def\@fs@pre{\vspace*{5pt}\hrule height.8pt depth0pt \kern2pt}%
	\def\@fs@post{\kern2pt\hrule\relax}%
	\def\@fs@mid{\kern2pt\hrule\kern2pt}%
	\let\@fs@iftopcapt\iftrue}
\title{Resource-Aware Control via Dynamic Pricing for Congestion Game with Finite-Time Guarantees}
\author{\IEEEauthorblockN{Ezra Tampubolon$^\dagger$, Haris Ceribasic$^\dagger$, and Holger Boche$^\dagger$*}
	\IEEEauthorblockA{
		$^\dagger$Technische Universit{\"a}t M{\"u}nchen, Lehrstuhl f{\"u}r Theoretische Informationstechnik\\
		*Munich Center for Quantum Science and Technology (MCQST)\\
		\{ezra.tampubolon,haris.ceribasic,boche\}@tum.de}
	%\and	
	%\IEEEauthorblockN{Ezra Tampubolon}
	%\IEEEauthorblockA{
	%Lehrstuhl f{\"u}r Theoretische Informationstechnik\\
	%Technische Universit{\"a}t M{\"u}nchen, 80290 M{\"u}nchen, Germany\\
	%ezra.tampubolon@tum.de}
%	\thanks{%
%		\vspace{-2.0em}  ~\protect\\ \protect\rule{3.5cm}{0.4pt} ~\protect\\ \protect \hspace*{1em}This work was partly supported by the German Research Foundation (DFG) within the Gottfried Wilhelm Leibniz-Program under Grant BO 1734/20--1.}
	
}
\begin{document}
%\ninept
%
\maketitle
%
%\circ
\begin{abstract}
Congestion game is a widely used model for modern networked applications. A central issue in such applications is that the selfish behavior of the participants may result in resource overloading and negative externalities for the system participants. In this work, we propose a pricing mechanism that guarantees the sub-linear increase of the time-cumulative violation of the resource load constraints.
%, of square root order w.r.t. the time horizon. 
The feature of our method is that it is resource-centric in the sense that it depends on the congestion state of the resources and not on specific characteristics of the system participants. This feature makes our mechanism scalable, flexible, and privacy-preserving. Moreover, we show by numerical simulations that our pricing mechanism has no significant effect on the agents' welfare in contrast to the improvement of the capacity violation.
\end{abstract}
\begin{IEEEkeywords}
Congestion Game, Resource Allocation, Decentralized algorithm, Mirror Descent, Pricing Algorithm, Network Routing
\end{IEEEkeywords}

% ================================================================================================================
% ========== INTRODUCTION ========================================================================================
% ================================================================================================================
\section{Introduction}
\label{sec:intro}
Optimizing users/devices competing for utilization of resources (be it network link, power supply, or wireless spectrum) have become essential components of modern networked systems such as IoT, smart grid, and cognitive radio. A trend in recent years is that the number of users in such applications increases tremendously (see e.g. \cite{Evans2011}). For instance: Analysts predicts that more than 50 billion things are expected to be connected over the internet by the end of 2020 \cite{Evans2011}. Such rapid growth involves certainly a series of challenges. 
One of the main challenges facing the system managers is congestion control (CC) of the available resources. For without it, negative externalities in the form of quality degradation of resource services might occur due to overload. For instance, in wireless communication network applications, an excessive amount of traffic through a base station or an access point (resource) might result in buffer bloat, and consequently in the inefficiency of the system in the form of high latency and network throughput reduction, causing a negative experience for all users. Furthermore, sophisticated congestion control method is crucial for making the electrical power driven technologies environment-friendly, Another trend visible in recent years is that power consumption due to technical applications constitutes a non-negligible part of the global power consumption with the tendency of enormous growth (see e.g., \cite{Pickavet2008}). 
%For instance, information and communication technology applications have been estimated to contribute more than $4\%$ of the worldwide energy consumption \cite{Pickavet2008} with the tendency of enormous growth in the era of IoT, 
%which is a worrying development from the environment-friendly point of view.   

%A natural fundament to develope congestion control method
The excessive numbers of users in modern networked systems and the decreasing degree of cooperativeness  justify the attractiveness of the famous game theoretical concepts for system modeling. A natural fundament for developing a CC method is the concept of the congestion game introduced in \cite{Rosenthal1973,Schmeidler1973}.
%\cite{Rosenthal1973}. 
The corresponding model assumes non-cooperative rational participants, whose strategy is an allocation policy over resources, and whose loss depends proportionally on the total load of the utilized resources. 

The most prominent classical example of a congestion game is the traffic routing model \cite{Wardrop1952}, where the arcs in a given network represent the resources, the different origin-destination pairs specify the player, and the possible action of a player is the allocation over the paths in the system.
% between his origin-destination pair. 
%%%%%%%%%%%%%%%%%%%%%%%Congestion and Wireless%%%%%%%%%%%%Neuere Literaturen bei google scholar holen, und verbindung mit unserem Werk machen!
This concept has also lead to fruitful discussions in the wireless network literature. It has recently been used in wireless network modeling, e.g., access point selection in WiFi networks \cite{Ercetin2008,Chen2010},
%resource competition in multi-camera wireless surveillance  networks  \cite{Hiang2010}, 
uplink resource allocation  in  multichannel  wireless  access networks  \cite{Altman2009},  wireless  channels  with  multipacket  reception capability \cite{Sanyal2010}, and the impact of interference set in studying the congestion  game  in  wireless  mesh  networks \cite{Argento2010}.

Many CC methods are user-centric in the sense that they require observability of system participants' actions and behaviors and provide specific instructions for all of the system users. Such practicas are not suitable for modern large-scale applications. % Probably need to reformulate this...%
The reason is threefold: First, such methods often lack scalability and flexibility; Second, the typically high number of participants in such applications makes the approaches computationally infeasible; Third, due to growing users' demands of sovereignty and privacy in recent years, direct observation and influence of users' acts by higher authority are highly undesirable.

\paragraph*{Our Contributions} 
In this work, we assume that the agents are rational and cost-oriented, in the sense that they choose actions minimizing the accumulated historical costs, and that they are non-cooperative, i.e. they do not mutually communicate. Based on that, we propose resource-centric dynamic pricing that offers the system participants appropriate incentives to adhere to the resource constraints jointly support sustainable use of the resources. 
%In this context, resource-centric means that the given method is mainly based on the observation of the actual congestion of the resource and not on the agents' specific characteristics such as the set of their possible bundle choices and their actual bundle choices. 
%We show that the proposed pricing mechanism ensures that the average violation of the capacity constraints decays at worst sub-linearly of order $\mathcal{O}(n^{-1/2})$ w.r.t. the time-horizon $n$
We present our theoretical guarantee that our proposed method ensures that the average violation of the capacity constraints decays of order $\mathcal{O}(n^{-1/2})$ w.r.t. the time-horizon $n$. Complementary to this result, we provide numerical simulations for the network routing game -- an instance of the congestion game. As a by-product of our practical investigation, we observe that, compared to the gain in resource sustainability,
%, although it does not use specific information about the agents, 
our pricing mechanism does not significantly effectuate the agents' welfare (expressed by their average loss significantly).
\paragraph*{Relation to prior work}

The congestion game has been investigated in several directions. Closely related to our work are the following approaches which
consider the game played repeatedly:
%, are closely
%related to our work:
%The congestion game without resource constraints has been investigated in several directions. The seminal work \cite{Roughgarden2002} investigated the loss of efficiency of the (Nash) equilibria. 
Under different model of the individual agents, \cite{Sandholm2001,Fischer2004,Blum2006,Krichene2015} study the convergence of selfish behaviour toward the Nash equilibrium. Besides the fact that it yields a sub-optimal welfare of the agents \cite{Roughgarden2002}, Nash equilibrium of this sort of game might not be a resource sustainable population state (see also the notion of generalized Nash equilibrium in \cite{Facchinei2007,Scutari2012}).  
In order to relieve those undesired effects, several works introduce exciting approaches. Closely related to ours are pricing based methods, e.g. \cite{Alpcan2002,Scutari2006,Ozdaglar2007,Farokhi2015,Barrera2015,Paccagnan2017}. 
%Alpcan2002: Continuous dynamic - unclear how the discrete dynamic behave, myopic agents that is the agents does not consider the past state, utility is indirect proportional to price -- ours is direct proportional 
The common aspect of the listed works is that they design a population dynamic which converges to the corresponding (designed) equilibrium fulfilling the capacity constraints (see e.g., the concept of generalized Nash equilibrium \cite{Facchinei2007}) of the problem-specific potential game \cite{Monderer1996}. A clear contrast to our work is that they only provide asymptotic guarantee. Moreover, the methods proposed in some of those works, such as,  requires agents' personalized information such as their utilities. 
%Until now, there is no approach to design a congestion control method that possesses a guarantee in the non-asymptotic regime.
%Specifically they are only able to guarantee the preservation of the constraint in large time.

%In the literature, the cost function of the agents consists, in contrast to our modeling, not only of congestion loss - and price-dependent term but also of idiosyncratic payoff, which depends only on the action of a single agent. However the given analysis in this paper can be extended straightforwardly to this model.
%\begin{itemize}
%\item Logit CHoice
%\end{itemize} 
%\textcolor{red}{
%	\begin{itemize}
%		\item Difference to usual literature: Our aim is not to lead agents choosing optimal route but to avoid resource overload which leads to negative externality which affects not only endogeneous instances but also exogeneous instances. 
%\end{itemize}} 

\paragraph*{Basic Notions and Notations}
For a real vector $a$, $[a]_{+}$ denotes the vector whose entries are the non-negative part of the entries of $a$.
%, i.e. $[a]_{+}=\max\lrbrace{0,a}$.
%For a convex subset $A\subseteq\real^{D}$, $\text{relint}(A)$ denotes the relative interior of $A$.
%, and $\norm{A}$ denotes the supremum of $\norm{x-y}$ over all $x,y\in A$ called the diameter of $A$ w.r.t. a norm $\norm{\cdot}$ on $\real^{D}$. 
%The projection onto the closed convex subset $A$ of $\real^{D}$ is denoted by $\Pi_{A}$. 
%The dual norm of a norm $\norm{\cdot}$ on $\real^{D}$ is denoted by $\norm{\cdot}_{*}$. 
%$F:\real^{D}\rightarrow\real^{D}$ is said to be Lipschitz continuous on an a non-empty subset $\Z\subset (\real^{D},\norm{\cdot})$ with constant $L>0$ if $\norm{F(x)-F(z)}_{*}\leq L\norm{x-z}$, $\forall x,z\in\mathcal{Z}$. 
%$F$ is said to be monotone on $\Z$ if $\inn{x_{1}-x_{2}}{F(x_{1})-F(x_{2})}\leq 0$, for all $x_{1},x_{2}\in \mathcal{Z}$. If in the latter strict inequality hold for $x_{1}\neq x_{2}$, then $F$ is said to be strictly monotone. 
%Let be $c>0$. $F$ is said to be $c$-strongly monotone on $\Z$ if $\inn{x_{1}-x_{2}}{F(x_{1})-F(x_{2})}\leq -c \norm{x_{1}-x_{2}}^{2}$, for all $x_{1},x_{2}\in \mathcal{Z}$. 
Let $(\X,\norm{\cdot})$ be a normed space and $A,B\subseteq \X$. We denote $A-B:=\lrbrace{x-y:~x\in A,~y\in B}$, and $\norm{A}:=\sup_{x\in A}\norm{x}$. In this work we assume that a probability space $(\Omega,\Sigma,\Prob)$ and a filtration $\mathbb{F}:=(\mathcal{F}_{n})_{n\in\nat_{0}}$ therein are given. 
%For ease of notations, we denote for each $n\in\nat_{0}$ the conditional expectation $\Erw[\cdot|\mathcal{F}_{n}]$ given $\mathcal{F}_{n}$ simply by $\Erw_{n}[\cdot]$. 

% ================================================================================================================
% ========== Main Result ==============================================================================
% ================================================================================================================

\section{Setting}
\paragraph*{Congestion Game}
%Setting zu lesen in Efficient Estimation of Equilibria of large congestion games with heterogeneous players
A congestion game consists of a finite set of agents/players $[N]$ and a finite set $\mathcal{R}$ of resources. To each agent $i\in [N]$, there corresponds a collection $\mathfrak{P}_{i}\subseteq 2^{\mathcal{R}}$ of resource bundles. One may encode the latter assumption by defining the adjacency matrix $\M^{(i)}\in \real^{|\mathcal{R}|\times |\mathfrak{P}_{i}|}$ whose $\mathcal{P}_{i}$-th column provides the information about all the resources contained in the bundle $\mathcal{P}_{i}$, i.e.:

\small
\begin{equation*}
[\M^{(i)}]_{r,\mathcal{P}_{i}}=
\begin{cases}
1 \quad &r\in\mathcal{P}_{i}\\
0 &\text{else}
\end{cases}.
\end{equation*}
\normalsize

 The aim of each agent $i\in [N]$ is to execute a certain amount $m_{i}>0$ of tasks by utilizing the bundles of resources from $\mathfrak{P}_{i}$. 
We describe the corresponding (utilization) action/strategy of agent $i$ by a vector $x^{(i)}\in \mathcal{X}_{i}$, where $\mathcal{X}_{i}$ is a scaled simplex on $\mathfrak{P}_{i}$, i.e.:
$
\mathcal{X}_{i}:=\lrbrace{x^{(i)}:=(x^{(i)}_{\mathcal{P}_{i}})_{\mathcal{P}_{i}\in\mathfrak{P}_{i}}\in\real^{\abs{\mathfrak{P}_{i}}}:~\sum_{\mathcal{P}_{i}\in\mathfrak{P}_{i}}x^{(i)}_{\mathcal{P}_{i}}=m_{i}}.
$
For any $\mathcal{P}_{i}\in\mathfrak{P}_{i}$, $x^{(i)}_{\mathcal{P}_{i}}$ corresponds to the amount of tasks agent $i$ allocates to the bundle $\mathcal{P}_{i}$. Equivalently, we can describe the task allocation strategy of agent $i$ by means of the simplex
%\begin{equation*}
$\Delta_{i}:=\lrbrace{\mu^{(i)}:=(\mu^{(i)}_{\mathcal{P}_{i}})_{\mathcal{P}_{i}\in\mathfrak{P}_{i}}\in\real^{|\mathfrak{P}_{i}|}:~\sum_{\mathcal{P}_{i}\in\mathfrak{P}_{i}}\mu^{(i)}_{\mathcal{P}_{i}}=1}.$
%\end{equation*}
%Indeed, for an allocation strategy $x^{(i)}\in\mathcal{X}_{i}$, it holds $x^{(i)}/m_{i}\in\Delta_{i}$, and for a strategy $\mu^{(i)}\in\Delta_{i}$, it holds $m_{i}\mu_{i}\in\mathcal{X}_{i}$. 
In this paper we describe the allocation strategy of agent $i$ by means of the simplex $\Delta_{i}$ instead of $\mathcal{X}_{i}$. We denote the set of population strategy by $\Delta=\prod_{i=1}^{N}\Delta_{i}$.

%We denote the set of population strategy by $\X=\prod_{i=1}^{N}\X_{i}$. 

%For simplicity, we assume for all $i\in [N]$, $m_{i}=1$ i.e. $\X_{i}$ is a simplex. By scaling argument, it is straightforward extend the results given in this work for general case.

Let $\mu^{(i)}\in\Delta_{i}$ be an allocation action of agent $i$. The total load $\phi_{r}^{(i)}(\mu^{(i)})$ of the resource $r\in\mathcal{R}$ caused by the allocation action $\mu^{(i)}\in\Delta_{i}$ of agent $i$ is given by $\phi^{(i)}_{r}(\mu^{(i)})=\sum_{\mathcal{P}_{i}\in\mathfrak{P_{i}}:r\in \mathcal{P}_{i}}m_{i}\mu^{(i)}_{\mathcal{P}_{i}}$.
%\begin{equation*}
%\phi^{(i)}_{r}(\mu^{(i)})=\sum_{\mathcal{P}_{i}\in\mathfrak{P_{i}}:r\in \mathcal{P}_{i}}m_{i}\mu^{(i)}_{\mathcal{P}_{i}}.
%%=(\widetilde{\M}^{(i)}\mu^{(i)})_{r},
%\end{equation*}
	 Accordingly, the total load $\phi_{r}(x)$ of resource $r$ caused by the population strategy $\mu\in \Delta$ is given by $\phi_{r}(\mu)=\sum_{i=1}^{N}\phi^{(i)}_{r}(\mu^{(i)})$
%\begin{equation}
%\phi_{r}(\mu)=\sum_{i=1}^{N}\phi^{(i)}_{r}(\mu^{(i)}).\label{Eq:aaaisishfggfjjsjjskkkkssss}
%\end{equation}
We sometimes also use the notation $\phi:=(\phi_{r})_{r\in\mathcal{R}}$. Moreover, we consider in this work the case where the load of resources $r\in\mathcal{R}$ is desirable to not exceed the capacity $L_{r}\in \real_{>0}$, i.e. $\phi_{r}(\mu(k))-L_{r}=:\Gamma_{r}(\mu)\leq 0$.
%\begin{equation*}
%\phi_{r}(\mu(k))-L_{r}=:\Gamma_{r}(\mu)\leq 0
%\end{equation*}
%Specifically, it is desired that the population strategy $\mu\in\Delta$ satisfies $\mu\in\mathcal{Q}$, where:
%\begin{equation*}
%\mathcal{Q}:=\lrbrace{\mu\in\Delta:~\Gamma(\mu)\leq 0},
%\end{equation*} 
%with $\Gamma(\mu):=(\Gamma_{r}(\mu))_{r\in\mathcal{R}}$ a vector with positive entries.

%\textcolor{blue}{In some practical situations, the cost of using one resource may depend on the rate of consumption of others. For instance, the time a vehicle needs to cross through a stop sign clearly depends on the amount of flow traversing the perpendicular street; the waiting time of passengers at a given bus stop depends on the number of passengers boarding the bus at previous stops; or, to give an example in the context of wireless communication networks, transmission delays depend on the load of neighboring cells, because of interference.}
To each resource $r\in\mathcal{R}$, we associate a function $\ell_{r}:\real_{\geq 0}\rightarrow\real$ which quantifies negative externalities induced on the resource $r$ due to load $\phi_{r}(\mu)$. We refer to $\ell_{r}$ as the loss function of the resource $r$. We assume throughout:
\begin{assum}
For all $r\in\mathcal{R}$, $\ell_{r}:\real_{\geq 0}\rightarrow\real$ is continuous, convex, and non-decreasing.
\end{assum}
\begin{assum}[Slater's Condition]
There exists $\hat{\mu}\in\Delta$ s.t. $\Gamma(\hat{\mu})<0$.
\end{assum}
The loss of a bundle $\mathcal{P}_{i}\in\mathfrak{P}_{i}$ (for agent $i$) is correspondingly given by $\ell_{\mathcal{P}_{i}}^{(i)}(\mu)=\sum_{r\in\mathcal{P}_{i}}\ell_{r}(\phi_{r}(\mu))$.
%\begin{align*}
%\ell_{\mathcal{P}_{i}}^{(i)}(\mu)&=\sum_{r\in\mathcal{P}_{i}}\ell_{r}(\phi_{r}(\mu)).
%%=\sum_{r\in\mathcal{P}_{i}}c_{r}((Mx)_{r})\\
%%&=(M^{(i),\T}c(Mx))_{\mathcal{P}_{i}}
%\end{align*}  
Throughout this work we use the notations $\ell^{(i)}:=(\ell^{(i)}_{\mathcal{P}_{i}})_{\mathcal{P}_{i}\in\mathfrak{P}_{i}}$ and $\ell:=(\ell^{(i)})_{i\in[N]}$.
%$\ell^{(i)}(x)$ denotes the vector of costs $(\ell^{(i)}_{\mathcal{P}}(x))_{\mathcal{P}\in\mathfrak{P}_{k}}$, and $\ell(x)$ denotes the vector $(\ell^{(i)}(x))_{i\in [N]}$. It holds:
%\begin{equation*}
%\ell^{(i)}(x)=M^{(i),\T}c(Mx)\quad\ell(x)=M^{\T}c(Mx)
%\end{equation*} 

An example of congestion game is the following:
\begin{example}[Network Routing Game] 
\label{Ex:Network Routing Game}
Given a directed Graph $\mathcal{G}=(\mathcal{V},\mathcal{E})$ with a vertex set $\mathcal{V}$ and edge set $\mathcal{E}\subseteq \mathcal{V}\times\mathcal{V}$. In a routing game, the task of agent $i\in [N]$ is to transport a certain amount of commodity $m_{i}>0$ from a starting point $s^{(i)}\in\mathcal{V}$ to a destination $t^{(i)}\in\mathcal{V}$. To fulfill this task, agent $i$ can use a prescribed collection $\mathfrak{P}_{i}\subseteq 2^{\mathcal{E}}$ of edges that connects $s^{(i)}$ and $t^{(i)}$. To every edge (resource) $e\in \mathcal{E}$ there corresponds a function $c_{e}$ (loss) that maps the total amount flow caused by the transport of commodities on the edge $e$ to a non-negative number determining the delay on $e$, and also a constant $\ell_{e}>0$ which prescribed the amount of flow admissible on edge $e$.
% Considering the edges $e$ as resources and $c_{e}$ as the corresponding loss function, one sees that this model is an example of a congestion game. 
\end{example}
\begin{remark}
	The congestion game which we investigate in this chapter is an instance of the so-called potential games which are games admitting a potential function: a  real-valued function whose unilateral change describes the change in the player's payoffs. Finite player potential game was firstly studied by Rosenthal \cite{Rosenthal1973}, who also recognized its relation to congestion game and systematically investigated by Monderer and Shapley \cite{Monderer1996}. The work \cite{Monderer1996} also provides a generalization of the finite player setting to infinite player setting, which is the subject of our investigations.
\end{remark}
\begin{remark} 
	The infinite player setting can be either seen as a mixed strategy version of finite player setting, or as an approximation of the finite player setting with large number of populations, and is more convenient; since the latter case can be cumbersome to analyze. However, there is a key difference between finite and infinite cases: while the Nash equilibria in finite case are exactly the optimizers of the corresponding potential function, not all equilibria are optimizers \cite{Sandholm2001} in the infinite case; although in that case, all optimizers of the potential function are equilibria.
\end{remark}
%\begin{example}[Load balancing game]
%%%%% Siehe auch Load Balancing QoS und Distributed selfish load balancing on networks
%Consider a set of machines $[M]$ and   
%\end{example}
%%
\paragraph*{Performance Measures}
Let be $k\in\nat$ and $\mu(\tau)$, $\tau\in [k]_{0}$, be a given sequence of population actions from initial time until time slot $k$.    
To evaluate the population performance in the congestion game we use the following criteria:

We measure the resource sustainability of the population sequential actions $(\mu(\tau))_{\tau\in[k]_{0}}$ by the (norm) of the aggregated admissible flow violation defined by:

\small
\[
\ACV(k) = \left\|\left[\sum_{\tau = 0}^{k-1} \Gamma_r(\mu(\tau))\right]_+\right\|_2 
%= \sqrt{\sum_{r \in \mathcal{R}} \left(\left[\sum_{\tau = 1}^k \Gamma_r(\tau)\right]_+\right)^2}
\]
\normalsize

Additional to resource sustainability behavior, we investigate the loss incurred to the population applying the resource allocation decisions $(\mu(\tau))_{\tau\in[k]_{0}}$ in form of the aggregated delay:

\small
\[
\AD(k) =  \sum_{\tau = 0}^k \sum_{i \in [N]}\Del_i(\tau),
\]
\normalsize
where $\Del_i(\tau)$ denotes the delay experienced by agent $i$ at time $\tau$, i.e. $\Del_i(\tau) = \sum_{\mathcal{P}_i \in \mathfrak{P}_i} \ell_{\mathcal{P}_i}^{(i)}(\mu(\tau)) \mu_{\mathcal{P}_i}^{(i)}(\tau)$.

It should be noted, that resource sustainability and loss minimization do not need to be coinciding objectives, but can display a trade-off behavior depending on model parameters, i.e. they appear as conflicting objectives. Therefore it can happen, that resource sustainability subsequently implies a disadvantaging of some agents.
\section{Resource-Centric Pricing for Congestion Game}
%\todo[inline]{HC to ET: Figure added, uncomment code below to include. TO DO: Scale properly}

\paragraph*{Population Dynamic via Score and Hedge strategy}
%\begin{figure}[htb]
%	\centering
%	\scalebox{.8}{\input{AgentStruct.tex}}
%	\caption{Agent Structure}
%	\label{fig:AgStruct}
%\end{figure}
Throughout this work, we consider the congestion game, which is played multiply with time horizon $n\in\nat$. We provide a summary of our model for the agents' decision-making process in Algorithm \ref{Alg:aoaishhjddhhddddeee2}.

\setlength{\textfloatsep}{3pt} 
\begin{algorithm}[htbp]
	\caption{Hedge algorithm with Prices}
	\begin{algorithmic}
		\REQUIRE $n\in\nat$, $\gamma>0$, $\Phi_{i}:\real^{\mathfrak{P}_{i}}\rightarrow\Delta_{i}$.
	\FOR{every agent $i\in [N]$ }
		\STATE{Initialize the score vector $Y_{0}^{(i)}\gets 0$}
	\ENDFOR
		\FOR{ time $k=1,2,\ldots,n$}
		\STATE Population apply the allocation strategy:
		\footnotesize $$X(k)=(m_{i}\mu^{(i)}(k))_{i\in [N]}$$\normalsize
		\FOR{every agent $i\in [N]$ }
		%		\FOR{ all bundle of ressource $\mathcal{P}_{i}\in\mathfrak{P}_{i}$}
		%		\STATE utilize $\mu^{(i)}_{\mathcal{P}_{i}}(k)\propto \exp\left( Y_{\mathcal{P}_{i}}^{(i)}(k)\right) $ 
		%		\ENDFOR
		%\STATE Send $\mu^{(i)}(k)$ to central operator
		\STATE Receive the price vector $(\Lambda_{r}(k))_{r\in\mathcal{R}}$ broadcasted by the regulator.
		\FOR{ all bundle of resource $\mathcal{P}_{i}\in\mathfrak{P}_{i}$}
		\STATE Experience the disturbed cost:
		
		\footnotesize
		$$\hat{\ell}^{(i)}_{\mathcal{P}_{i}}(k)\gets \ell^{(i)}_{\mathcal{P}_{i}}(\mu(k))+\xi^{(i)}_{\mathcal{P}_{i}}(k+1)$$
		\normalsize
		
		\STATE Compute the price per amount of task:
		\footnotesize
		 \begin{equation*}
		\pi_{\mathcal{P}_{i}}^{(i)}(k)=\sum_{r\in\mathcal{P}_{i}}\Lambda_{r}(k)
		\end{equation*}
		\normalsize

			\STATE Update the score of bundle $\mathcal{P}_{i}$: 
			\footnotesize
			\begin{equation}
			\label{Eq:Update}
			 Y^{(i)}_{\mathcal{P}_{i}}(k+1)\gets Y^{(i)}_{\mathcal{P}_{i}}(k)-\gamma \left[ \hat{\ell}^{(i)}_{\mathcal{P}_{i}}(k)+\pi^{(i)}_{\mathcal{P}_{i}}(k)\right]
			 \end{equation}
			 \normalsize
			 
		\ENDFOR
		\STATE Generate the allocation strategy (see \eqref{Eq:aaooshshgdggdhhdgddhs2}):
		\footnotesize
		\begin{equation*}
		\mu^{(i)}(k+1)\gets \Phi^{(i)}(Y^{(i)}(k+1))
		\end{equation*}
		\normalsize
		\ENDFOR
%		\FOR{Regulator}
%				\FOR {$r\in\mathcal{R}$}
%				%\STATE Check the actual load $\phi_{r}(X(k))$ of ressource $r$
%				\STATE Determine the next price $\Lambda_{r}(k+1)$ of resource $r$.
%%				       \begin{equation*}
%%				       \Lambda_{r}(k+1)\gets\Pi(\Lambda_{r}(k),\phi_{r}(X(k))) 
%%				       \end{equation*} 
%				\ENDFOR
%				%\STATE Update
%	    \ENDFOR 
%		%		\begin{equation*}
		%		\Lambda(k+1)\leftarrow\Pi_{\real^{\mathcal{R}}_{\geq 0}}\left((1-\gamma)\Lambda(k)+\delta\left[\phi(\mu(k))-c\right]\right)
		%		\end{equation*}
		%		\STATE Broadcast $\Lambda(k+1)$ to all players.	
		\ENDFOR 
		%\ENDFOR
	\end{algorithmic}
	\label{Alg:aoaishhjddhhddddeee2}
\end{algorithm}
\setlength{\textfloatsep}{0pt}

According to Algorithm \ref{Alg:aoaishhjddhhddddeee2}, every agent $i\in[N]$ accumulates at each round $k\in[n]$ the present and historical cost (discounted by a given parameter $\gamma$) of each resource bundle available to him, aiming to provide the scores of her bundle preferences. This model corresponds to non-myopic data-driven agents that utilize historical data to derive their strategy. This assumption of agents' behavior is quite plausible for recent applications that mostly utilize statistical and learning methods by accumulating past data (in this context: resource costs). 

 The corresponding actual cost of an available bundle consists of the actual noisy loss caused by negative externalities and the price set exogenously by a regulator (c.f. Algorithm 	\ref{Alg:aoaishhjddhhddddeee}). We assume that the nois$(\xi_{n})_{n\in\nat}$ is a $\real^{\sum_{i=1}^{N}\abs{\mathfrak{P}_{i}}}$-valued \textit{$\mathbb{F}$-martingale difference sequence} which is a quite general noise model.
%i.e.:
%\begin{enumerate}
%	\item $\mathbb{F}$-Adaptedness: $\xi_{n}$ is $\mathcal{F}_{n}$-measureable for all $n\in\nat$
%	\item Integrability: $\Erw[\norm{\xi_{n}}_{\infty}]<\infty$ for all $n\in\nat$
%	\item Conditionally zero mean: $\Erw_{n-1}[\xi_{n}]=0$ for all $n$.
%\end{enumerate}
One reason that we model the loss as noisy is that the environment or the imperfectness of agents' sensing devices can cause imperfectness of agents' feedback. Another reason is that we can handle the case where the agents' actions are discrete while their strategies are mixed states, and thus the resource congestion only represents an unbiased sample of the congestion specified by the mixed strategies (see \cite{Mertikopoulos2018}).

%The function of the price is to enco

 The mapping $\Phi^{(i)}$ serves to model how the $i^{\text{th}}$ agent builds up his allocation strategy from the actual score of the bundles. In this work, we investigate the case where it takes the following specific form:
\begin{equation}
\label{Eq:aaooshshgdggdhhdgddhs2}
(\Phi^{(i)}(y^{(i)}))_{\mathcal{P}_{i}}=\tfrac{\exp(y^{(i)}_{\mathcal{P}_{i}})}{\sum_{\widetilde{\mathcal{P}}_{i}\in\mathfrak{P}_{i}}\exp(y^{(i)}_{\widetilde{\mathcal{P}}_{i}})}.
\end{equation}
%Notice that it holds $\Phi^{(i)}>0$. So every agent utilizes, although only a small amount, all possible bundles. This diversification strategy ensures that each agent keeps track of the congestion states, the loss of all bundles. 
Without altering the analysis given in this work, one can use more generally the concept of the mirror map (See also \cite{Mertikopoulos2018}) for specifying the choice map $\Phi^{(i)}(y^{(i)}):=\argmax\limits_{\mu^{(i)}\in\Delta_{i}}\left\{\left\langle \mu^{(i)},y^{(i)}\right\rangle-\psi_{i}(\mu^{(i)})\right\}$,
% Specifically $\Phi_{i}$ which takes the form 
%\begin{equation*}
%\Phi^{(i)}(y^{(i)}):=\argmax\limits_{\mu^{(i)}\in\Delta_{i}}\left\{\left\langle \mu^{(i)},y^{(i)}\right\rangle-\psi_{i}(\mu^{(i)})\right\},
%\end{equation*}
where $\psi^{(i)}:\Delta_{i}\rightarrow \real$ strongly convex w.r.t. $\norm{\cdot}_{1}$. For instance, one can use the usual Euclidean projection onto the simplex instead.

%Population dynamic given in Algorithm \ref{Alg:aoaishhjddhhddddeee2} together with the mapping \eqref{Eq:aaooshshgdggdhhdgddhs2} satisfies the following properties:
%\begin{enumerate}
%	\item Reinforcing: Every agent prefers bundles that have lower cost accumulated over time.
%	\item Consistency: The allocation strategy is admissible, i.e. it is in the simplex $\Delta_{i}$.
%	\item Distributed: Every agent needs only to monitor the losses of his bundles and the actual price of the available resources.
%\end{enumerate}
%The first property of optimizing agents certainly requires rationality of the selfish agents. One can assume rationality in the practical technical application since the agents are rather machines than humans. Also, the assumption that the agents consider historical cost and not only actual cost suits for practical consideration. 
\begin{remark}
Without altering the analysis given in this work, one can use more generally the concept of mirror map (See also \cite{Mertikopoulos2018}) for specifying the choice map $\Phi_{i}$. Specifically $\Phi_{i}$ which takes the form 
\begin{equation*}
\Phi^{(i)}(y^{(i)}):=\argmax\limits_{\mu^{(i)}\in\Delta_{i}}\left\{\left\langle \mu^{(i)},y^{(i)}\right\rangle-\psi_{i}(\mu^{(i)})\right\},
\end{equation*}
for a function $\psi^{(i)}:\Delta_{i}\rightarrow \real$ strongly convex w.r.t. $\norm{\cdot}_{1}$. For instance the usual Euclidean projection onto the simplex can be used as the choice map.
\end{remark}
\paragraph*{Pricing Algorithm}
\setlength{\textfloatsep}{0pt} 
\begin{algorithm}[htbp]
	\caption{Resource-Centric Pricing}
	\begin{algorithmic}
		\REQUIRE $n\in\nat$, $\beta>0$, $\alpha\in (0,1]$  
		\STATE Initialize the price vector $\Lambda_{0}\gets 0$
		\FOR{ time $k=1,2,\ldots,n$}
		%		\STATE Population mutually implement the allocation $X(k)$
		%		\FOR{every agent $i\in [N]$ }
		%%		\FOR{ all bundle of ressource $\mathcal{P}_{i}\in\mathfrak{P}_{i}$}
		%%		\STATE utilize $\mu^{(i)}_{\mathcal{P}_{i}}(k)\propto \exp\left( Y_{\mathcal{P}_{i}}^{(i)}(k)\right) $ 
		%%		\ENDFOR
		%		%\STATE Send $\mu^{(i)}(k)$ to central operator
		%		\FOR{ all bundle of ressource $\mathcal{P}_{i}\in\mathfrak{P}_{i}$}
		%		\STATE Experience the disturbed loss by utilizing $\mathcal{P}_{i}$:
		%		
		%		$$\hat{\ell}^{(i)}_{\mathcal{P}_{i}}(k)\gets \ell^{(i)}_{\mathcal{P}_{i}}(X(k))+\xi^{(i)}_{\mathcal{P}_{i}}(k+1)$$
		%		\STATE Compute the price of bundle $\mathcal{P}_{i}$:
		%		       \begin{equation*}
		%		       \pi_{\mathcal{P}_{i}}^{(i)}(k)=\sum_{r\in\mathcal{P}_{i}}\Lambda_{r}(k)
		%		       \end{equation*}
		%		\STATE Update the score of bundle $\mathcal{P}_{i}$: $$ Y^{(i)}_{\mathcal{P}_{i}}(k+1/2)\gets Y^{(i)}_{\mathcal{P}_{i}}(k)-\gamma \left[ \hat{\ell}^{(i)}_{\mathcal{P}_{i}}(k)+\pi^{(i)}_{\mathcal{P}_{i}}(k)\right] $$
		%		
		%%		$$Y^{(i)}_{\mathcal{P}_{i}}(k+1)\gets Y^{(i)}_{\mathcal{P}_{i}}(k+1/2)-\pi_{\mathcal{P}_{i}}(k)$$
		%		\ENDFOR
		%		\STATE Generate the allocation strategy:
		%		\begin{equation*}
		%		X^{(i)}(k+1)\gets\Phi(Y^{(i)}(k+1))
		%		\end{equation*}
		%
		%		\ENDFOR
		\FOR {$r\in\mathcal{R}$}
		\STATE Check the actual load $\phi_{r,k}:=\phi_{r}(X(k))$ of ressource $r$ caused by Algorithm \ref{Alg:aoaishhjddhhddddeee2} 
		\STATE Update the price of resource $r$:
		\begin{equation}
		\label{Eq:aiaiahshshsss}
		\Lambda_{r}(k+1)\leftarrow\left[ (1-\alpha)\Lambda_{r}(k)+\beta\left(\phi_{r,k}-L_{r}\right)\right]_{+} 
		\end{equation}
		%		\begin{equation*}
		%		\tilde{\Lambda}(k+1)\leftarrow \beta\Lambda(k+1)
		%		\end{equation*}
		\ENDFOR 
		\ENDFOR
	\end{algorithmic}
	\label{Alg:aoaishhjddhhddddeee}
\end{algorithm}
\setlength{\textfloatsep}{0pt} 
To encourage sustainable use of the resources, we specify the price vector required by Algorithm \ref{Alg:aoaishhjddhhddddeee2} via the mechanism described in Algorithm \ref{Alg:aoaishhjddhhddddeee}. We underline the fundamental role of the price to reflect the scarcity of a resource by setting the price update \eqref{Eq:aiaiahshshsss} proportional to the present congestion state $\phi_{r,k}-L_{r}$ (with the parameter $\beta$ specifying the sensitivity of the prices to the congestion state). This aspect allows the regulators to indicate a possible resource overload implicitly.

Furthermore, we introduce "memory" into the price dynamics by involving the previous price update $\Lambda_{r}(k)$ into \eqref{Eq:aiaiahshshsss}. The reason is twofold; firstly, to ensure the alignment of the incentives with the non-myopic behavior of the agents, and secondly to track the congestion dynamic for analytical purposes. The latter reason becomes clear by iterating \eqref{Eq:aiaiahshshsss} (with $\alpha=0$), and recognizing, that the prices give an upper bound for the $\ACV$. However, a possible drawback of this procedure is that a sharp price increase might result in a domination of the agents' preferences (expressed by their losses): It follows from \eqref{Eq:Update} that unusually high prices caused the agents to decide for the resources, having the lowest prices and not for the ones giving them the lowest loss -- resulting in a degradation of the population's welfare. Thus, we introduce in Algorithm \ref{Alg:aoaishhjddhhddddeee} the parameter $\alpha$ whose role is to bypass the phenomenon above by offsetting the memory in the price dynamic. 
 %\setlength{\textfloatsep}{0pt} 
 %\begin{figure}[htbp]
 	\begin{figure}[htbp]
 	%\begin{tikzpicture}[scale=0.6]
 	\centering
 	\includegraphics[scale=0.7]{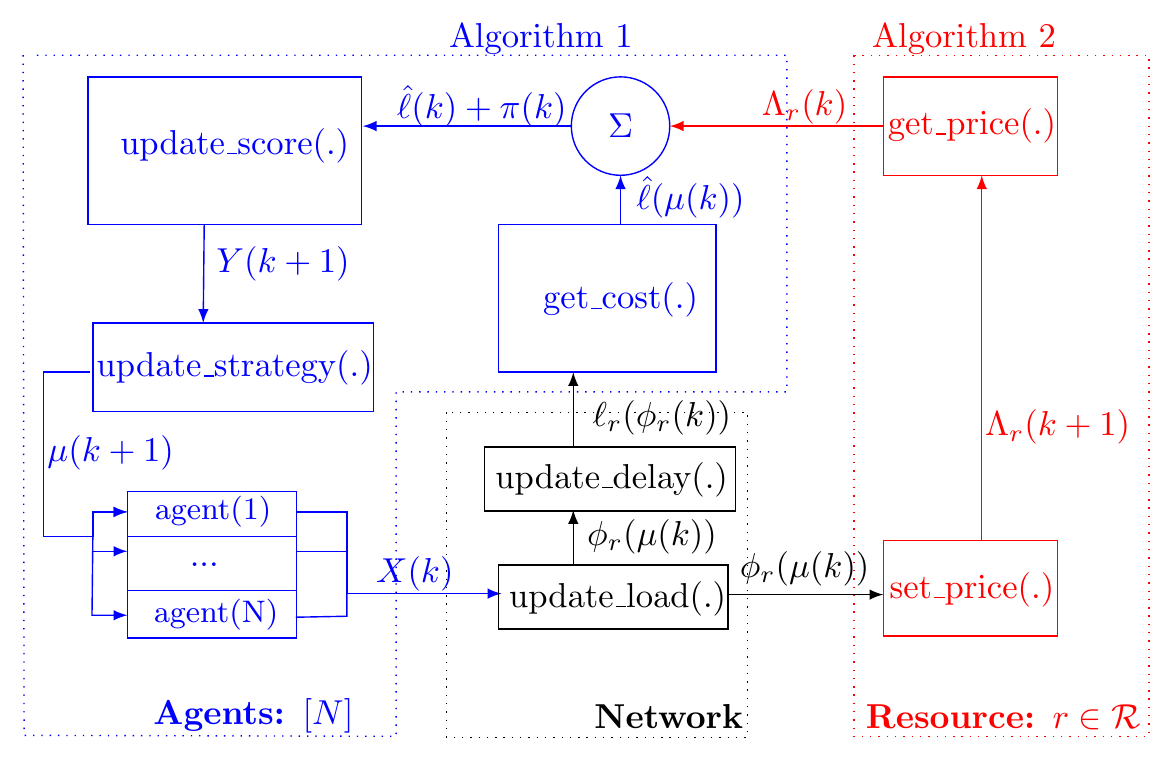}
 	%\input{AlgorithmSketch.tex}
 	%\end{tikzpicture}
 	\caption{Sketch of Algorithms 1 and 2. The blue color marks the agents' affairs, the red price the setters' (resources), and the black the networks'}
 	\label{fig:my_label}
 \end{figure}
 \paragraph*{Relation between Algorithms \ref{Alg:aoaishhjddhhddddeee2} and \ref{Alg:aoaishhjddhhddddeee}}
 In order to clarify the relationship between the price setters, i.e., resources, we sketch the connection between Algorithms \ref{Alg:aoaishhjddhhddddeee2} and \ref{Alg:aoaishhjddhhddddeee} in Figure   \ref{fig:my_label}.
  \begin{figure}[htbp]
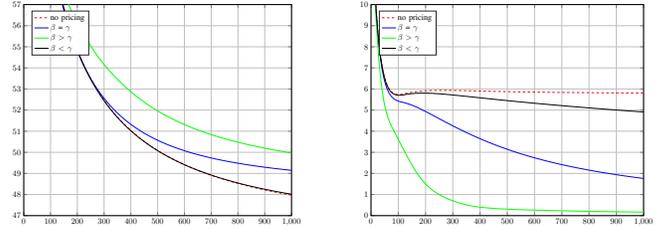
	
 	%\begin{minipage}[b]{1.0\linewidth}
 	
 	%	%\input{AD.tex}
 	%\centering
 	%	%\input{ACV.tex}
 	%	\scalebox{.31}{\input{ACV.tex}}
 	%	%  \vspace{1.5cm}
 	%	%.48
 	%	\centerline{\footnotesize(a) ACV}
 	%\end{minipage}
 	%
 	\begin{minipage}{.48\linewidth}
 		\centering
 		\scalebox{.31}{\input{AD.tex}}
 		%  \vspace{2.0cm}
 		%1.0
 		\centerline{\footnotesize(a) AD averaged over time}
 		
 	\end{minipage}
 	\hfill
 	\begin{minipage}{0.48\linewidth}
 		\centering
 		\scalebox{.31}{\input{AACV.tex}}
 		\centerline{\footnotesize(b) ACV averaged over time}
 	\end{minipage}
 	%0.48
 	\caption{Performance for $L_{r}=14$}
 	\label{fig:Perf}
 \end{figure} It is apparent that the resource-prices are decided parallelly in-situ and do not require any centralized instance in contrast to most of the resource control mechanisms such as the 
 bidding-based and auction-based mechanism.
 This aspect is an advantage since centralized solutions are known to be sensitive to malicious attacks and require rather sophisticated computations, e.g., solving an optimization problem. Also, we want to stress that the price of a resource $r$ is based purely on the congestion state $\phi_{r,k}-L_{r}$ and not on the (preferences of the) agents utilizing the resource $r\in\mathcal{R}$.Since the agents do not have to reveal their strategy and preferences (e.g., resource bundles), our method respects the sovereignty and the privacy of the individuals. Moreover, since our method does not cling to a specific agent's feedback, agents can be added or removed, making this approach particularly flexible. 
 
By not knowing the preferences of the agents and due to the absence of a centralized instance, we may sacrifice some desired properties of the mechanism (e.g., driving the population toward a socially optimum state and budget balance). However, in order to approach the fulfillment of the first property, tuning the parameters of the mechanism and accepting looser resource constraints so that the prices do not dominate the loss of the agents, results in resource sustainability with a lower cost of welfare degradation (for details see Section \ref{Sec:aiaisshshssggsss}).    

\section{Performance Analysis}
\label{Sec:PerfAn}

Throughout, $C_{1},C_{2},C_{3},m_{*}$ denote non-negative constants fulfilling for all $\mu\in\Delta$ and $\lambda\in\real^{\mathcal{R}}_{\geq 0}$:
\begin{equation*}
%\label{Eq:aoaoajjshhdggdhdggsss}
\begin{split}
%\label{Eq:aoaoshshjhsjjss}
&\sum_{i=1}^{N}m_{i}\norm{\M^{(i),\T}\lambda}^{2}_{\infty}\leq C_{1}^{2}\norm{\lambda}^{2}_{2},~\sum_{i=1}^{N}m_{i}\norm{\ell^{(i)}(\mu)}_{\infty}^{2}\leq C_{2}^{2}\\
&\norm{\phi(\mu)-L}_{2}\leq C_{3},\quad m_{i}\leq m_{*}, ~\forall i\in[N] .
\end{split}  
\end{equation*}
%\textcolor{red}{which clearly exists by our assumptions on $g$ and $u$}. 

Our main result is the following:
\begin{theorem}
\label{Thm:oajjshhdggfggfhhdggdhhd3}
Let $\gamma>0$ be given, $\beta=\gamma$, and $\alpha=\delta \gamma^{2}$ with $\delta>0$ satisfying  \begin{equation}
\label{Eq:aiaiahhfggfdhdgdhdhdgsgss}
(C_{1}^{2}+\gamma^{2}\delta^{2})-\frac{\delta}{2}\leq 0.
\end{equation} 
%\begin{itemize}
%	\item $\beta=\gamma$,
%	\item $\alpha=\delta \gamma^{2}$ with $\delta\in(0,1/\gamma^{2})$ satisfying  \begin{equation}
%	\label{Eq:aiaiahhfggfdhdgdhdhdgsgss}
%	(C_{1}^{2}+\gamma^{2}\delta^{2})-\frac{\delta}{2}\leq 0.
%	\end{equation}
%\end{itemize}
It holds:
\begin{equation}
\label{Eq:aiaiahssgsghshhsgsgs}
\begin{split}
\Erw\left[\tfrac{\norm{\Lambda(n)-\lambda_{*}}_{2}^{2}}{2}\right]\leq&\tfrac{\Delta\psi^{2}}{2}+(1+\alpha n)\tfrac{\norm{\lambda_{*}}_{2}^{2}}{2}+
\tfrac{\tilde{C}_{1}^{2}}{2}\gamma^{2}n\\
&+2\gamma^{2}m_{*}N\sum_{k=1}^{n}\Erw[\norm{\xi_{k}}^{2}_{\infty}]
\end{split}
\end{equation}
where $\widetilde{C}_{1}^{2}:=2\left(C_{2}^{2}+2C_{3}^{2}\right)$ and $\Delta\psi^2=2m_{*}\sum_{i=1}^{N}\ln(\abs{\mathfrak{P}_{i}})$
%\begin{equation*}
%\widetilde{C}_{1}^{2}:=2\left(C_{2}^{2}+2C_{3}^{2}\right)\quad
%\Delta\psi^2=2m_{*}\sum_{i=1}^{N}\ln(\abs{\mathfrak{P}_{i}}).
%\end{equation*}
%\textcolor{red}{Es gilt $\max_{\X_{i}}\psi_{i}-\min_{\X_{i}}\psi_{i}=\ln(\abs{\mathfrak{P}_{i}})$}
\end{theorem}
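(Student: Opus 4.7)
The strategy is a primal--dual / saddle-point analysis coupling the Hedge regret bound on the agents' side with a quasi-descent inequality for the prices, where the key cross-term is killed by the Slater / saddle-point condition. Throughout, fix a saddle point $(\mu_*,\lambda_*)$ of the Lagrangian $L(\mu,\lambda) = F(\mu) + \langle\lambda,\phi(\mu)-L\rangle$ associated with minimizing the potential $F$ of the congestion game (so that $\partial F/\partial(m_i\mu^{(i)}_{\mathcal P_i}) = \ell^{(i)}_{\mathcal P_i}(\mu)$) subject to the capacity constraints; Assumption~2 guarantees the existence of such a $\lambda_*\geq 0$.

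On the primal side I would apply the standard Hedge / entropic mirror descent regret bound: with $g^{(i)}(k) := \hat\ell^{(i)}(k) + \M^{(i),\T}\Lambda(k)$,
\[
\gamma \sum_{k=1}^n \langle\mu^{(i)}(k) - \mu^{(i)}_*, g^{(i)}(k)\rangle \leq \ln|\mathfrak{P}_i| + \tfrac{\gamma^2}{2}\sum_k \|g^{(i)}(k)\|_\infty^2 .
\]
Weighting by $m_i$, summing over $i$, and using $\sum_i m_i\langle\mu^{(i)},\M^{(i),\T}\Lambda\rangle = \langle\Lambda,\phi(\mu)\rangle$ together with $\sum_i m_i\ln|\mathfrak{P}_i|\leq\Delta\psi^2/2$ yields a control on $\gamma\sum_k\langle\Lambda(k),\phi(\mu(k))-\phi(\mu_*)\rangle$. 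On the dual side, non-expansiveness of $[\cdot]_+$ on $\real_{\geq 0}^{\mathcal R}$ applied to $\lambda_*\geq 0$, combined with the elementary estimates $\|-\alpha\Lambda(k)+\beta\Gamma_k\|_2^2\leq 2\alpha^2\|\Lambda(k)\|_2^2+2\beta^2\|\Gamma_k\|_2^2$ and $2\alpha\langle\lambda_*,\Lambda(k)\rangle\leq \alpha(\|\lambda_*\|_2^2+\|\Lambda(k)\|_2^2)$, yields the one-step inequality
\[
\tfrac{1}{2}\|\Lambda(k+1)-\lambda_*\|_2^2 \leq \tfrac{1}{2}\|\Lambda(k)-\lambda_*\|_2^2 + (\alpha^2-\tfrac{\alpha}{2})\|\Lambda(k)\|_2^2 + \tfrac{\alpha}{2}\|\lambda_*\|_2^2 + \beta\langle\Lambda(k)-\lambda_*,\Gamma_k\rangle + \beta^2\|\Gamma_k\|_2^2,
\]
which telescopes (with $\Lambda(0)=0$) to produce the $(1+\alpha n)\|\lambda_*\|_2^2/2$ contribution and leaves a residual proportional to $\sum_k\langle\Lambda(k)-\lambda_*,\Gamma_k\rangle$.

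The coupling step is where $\beta=\gamma$ is used. Feasibility of $\mu_*$ and $\Lambda(k)\geq 0$ give $\gamma\langle\Lambda(k),\phi(\mu_*)-L\rangle\leq 0$, so the primal bound controls $\gamma\sum_k\langle\Lambda(k),\Gamma_k\rangle$ modulo the loss-regret $\gamma\sum_k\sum_i m_i\langle\mu^{(i)}(k)-\mu^{(i)}_*,\hat\ell^{(i)}(k)\rangle$. Taking expectation, and using the martingale property $\Erw[\xi(k+1)\mid\mathcal F_k]=0$, the latter equals $\gamma\sum_k\sum_i m_i\langle\mu^{(i)}(k)-\mu^{(i)}_*,\ell^{(i)}(\mu(k))\rangle$, which by convexity of $F$ together with the saddle-point inequality $L(\mu_*,\lambda_*)\leq L(\mu,\lambda_*)$ is bounded below by $\gamma\sum_k\bigl[F(\mu(k))-F(\mu_*)\bigr]\geq -\gamma\sum_k\langle\lambda_*,\Gamma_k\rangle$. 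Thus the sum of (minus) this loss-regret and the $-\beta\sum_k\langle\lambda_*,\Gamma_k\rangle$ piece coming from the dual residual is non-positive, and the cross-term $\langle\lambda_*,\Gamma_k\rangle$ disappears.

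After this cancellation, the remaining $\sum_k\|\Lambda(k)\|_2^2$ prefactor combines the primal's $\gamma^2 C_1^2$ (via $\sum_i m_i\|\M^{(i),\T}\Lambda(k)\|_\infty^2\leq C_1^2\|\Lambda(k)\|_2^2$) with the dual's $\alpha^2-\alpha/2$:
\[
\gamma^2 C_1^2 + \alpha^2 - \tfrac{\alpha}{2} = \gamma^2\bigl(C_1^2 + \gamma^2\delta^2 - \tfrac{\delta}{2}\bigr) \leq 0
\]
by \eqref{Eq:aiaiahhfggfdhdgdhdhdgsgss}, so those iterate-dependent terms drop entirely. The remaining deterministic contributions are bounded via $\sum_i m_i\|\ell^{(i)}(\mu(k))\|_\infty^2\leq C_2^2$ and $\|\Gamma_k\|_2^2\leq C_3^2$, producing the $\widetilde C_1^2\gamma^2 n/2$ term, while the stochastic part enters through $\sum_i m_i\|\xi^{(i)}(k+1)\|_\infty^2\leq m_*N\|\xi(k+1)\|_\infty^2$, giving the $2\gamma^2 m_*N\sum_k\Erw[\|\xi_k\|_\infty^2]$ term. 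The delicate step, and the main obstacle, is orchestrating the saddle-point cancellation of $\langle\lambda_*,\Gamma_k\rangle$ so that condition \eqref{Eq:aiaiahhfggfdhdgdhdhdgsgss} is exactly what is needed to annihilate the $\|\Lambda(k)\|_2^2$ feedback; this is what forces the coupled scaling $\beta=\gamma$, $\alpha=\delta\gamma^2$.
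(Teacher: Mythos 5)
Your proposal is correct and follows the same overall architecture as the paper's proof: a primal energy/regret inequality for the entropic (Hedge) update, the identical one-step inequality for $\tfrac{1}{2}\norm{\Lambda(k+1)-\lambda_*}_{2}^{2}$ via non-expansiveness of $[\cdot]_{+}$, cancellation of the price cross-terms under $\beta=\gamma$, annihilation of the $\sum_{k}\norm{\Lambda(k)}_{2}^{2}$ feedback by exactly $\gamma^{2}C_{1}^{2}+\alpha^{2}-\alpha/2\leq 0$, and the martingale argument ($\Erw[S_{n}]=0$) for the noise. The one substantive difference is how nonnegativity of the coupled cross term is certified. The paper forms the augmented KKT operator $\tilde v(z)=[\nabla V(\mu)+\widetilde{\M}^{\T}\Lambda,\,L-\widetilde{\M}\mu]$, proves its monotonicity (its proposition on KKT operators) and uses that $(\mu_*,\lambda_*)$ solves the associated variational inequality to conclude $\inn{z(k)-z_*}{\tilde v(z(k))}\geq 0$. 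You instead argue directly through the Lagrangian: convexity of the Rosenthal potential gives $\inn{\mu(k)-\mu_*}{\nabla V(\mu(k))}\geq V(\mu(k))-V(\mu_*)$, the saddle-point inequality with complementary slackness gives $V(\mu(k))-V(\mu_*)\geq-\inn{\lambda_*}{\Gamma(\mu(k))}$, and feasibility of $\mu_*$ together with $\Lambda(k)\geq 0$ gives $\inn{\Lambda(k)}{\phi(\mu(k))-\phi(\mu_*)}\geq\inn{\Lambda(k)}{\Gamma(\mu(k))}$; summed, this reproduces exactly the paper's inequality, so the cancellation you describe is sound. Your route is slightly more elementary (no VI/monotone-operator machinery, no separate monotonicity proposition) but is tied to the potential/Lagrangian structure, whereas the paper's operator formulation would carry over verbatim to general monotone non-potential games; likewise your "standard Hedge regret bound" is precisely the telescoped Fenchel-coupling estimate with $\sum_{i}m_{i}\ln\abs{\mathfrak{P}_{i}}\leq\Delta\psi^{2}/2$ in the same role, and your constants ($C_{1},C_{2},C_{3}$, the noise term $2\gamma^{2}m_{*}N\,\Erw[\norm{\xi_{k}}_{\infty}^{2}]$) match the paper's bookkeeping. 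Incidentally, the explicit combination both derivations produce is $(2C_{2}^{2}+C_{3}^{2})\gamma^{2}n$ rather than the stated $\widetilde{C}_{1}^{2}\gamma^{2}n/2=(C_{2}^{2}+2C_{3}^{2})\gamma^{2}n$; this discrepancy is already present between the paper's proof and its theorem statement and is immaterial to the rate, so it is not a gap in your argument.
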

%\todo[inline]{HC to ET: Theorem 2 checked!}
\begin{remark}
A necessary condition for gamma such that there exists a $\delta>0$ satisfying \eqref{Eq:aiaiahhfggfdhdgdhdhdgsgss} is:
\begin{equation}
\label{Eq:aoaosjsjskdjdjdjdhff}
\gamma\leq\frac{1}{4C_{1}}.
\end{equation} 
If this is fulfilled, then \eqref{Eq:aiaiahhfggfdhdgdhdhdgsgss} is equivalent to:
\begin{equation}
\label{Eq:aaoaosjjshdhdhhdhddd}
 \frac{1-\sqrt{1-16\gamma^{2}C_{1}^{2}}}{4\gamma^{2}}\leq\delta\leq \frac{1+\sqrt{1-16\gamma^{2}C_{1}^{2}}}{4\gamma^{2}}.
\end{equation}
We also observe that for small enough $\gamma$, we can choose $\delta\approx 2 C_1$, which does not depend on the horizon length. Attentive reader may recognize by inspecting the proof of above Theorem that in order that above result holds, it is not necessary, that $\alpha$ is of the form $\alpha=\delta\gamma^{2}$, and thus that the regulator knows precisely about the agents' step size. The only requirement is that $\alpha$ has to decay slower than $\gamma^{2}$ with the time horizon $T$. However, one obtains the best rate for the performance guarantee in case that $\alpha$ is of order $\gamma^{2}$ (w.r.t. $T$). 
\end{remark}
%\todo[inline]{HC to ET: Corollary 3 checked!}
The proof of Theorem \ref{Thm:oajjshhdggfggfhhdggdhhd3} is given in the Appendix.

An immediate consequence of Theorem \ref{Thm:oajjshhdggfggfhhdggdhhd3} is the following (for proof see the full version \cite{Tamp2020Cong}) guarantee for the accumulation of the capacity violation:
\begin{corollary}
\label{Corr:aiaahssgsgshhsssddd}
Suppose that the conditions of Theorem \ref{Thm:oajjshhdggfggfhhdggdhhd3} are fulfilled and that the noise is persistent in the sense that there exists $\sigma^{2}>0$ s.t. $\Erw[\norm{\xi_{k}}_{\infty}^{2}]\leq \tfrac{\sigma^{2}}{4m_{*}N}$ for all $k\in \nat$.
%:
%\begin{equation*}
%\Erw[\norm{\xi_{k}}_{\infty}^{2}]\leq \tfrac{\sigma^{2}}{4m_{*}N},\quad\forall k\in \nat.  
%\end{equation*}
It holds:
\begin{equation}
\label{Eq:aiaiahssgsghshhsgsgs2}
\begin{split}
\Erw\left[\norm{\Lambda(n)}_{2}\right]\leq& \Delta\psi+(1+\sqrt{(1+\delta\gamma^{2} n)})\norm{\lambda_{*}}_{2}\\
&+
(\tilde{C}_{1}+\sigma)\gamma\sqrt{n},
\end{split}
\end{equation}
where $\Delta\psi$ and $\tilde{C}_{1}$ is given as in Theorem \ref{Thm:oajjshhdggfggfhhdggdhhd3}. 
Now, suppose that $\gamma:=c/\sqrt{n}$:
%\begin{equation}
%\label{Eq:aoaoshsshdggfffdggddddd}
%\gamma:=\sqrt{\tfrac{\overline{\Delta\psi(\X)}}{\tilde{C}_{1}n}},
%\end{equation}
%\begin{equation}
%\label{Eq:aoaoshsshdggfffdggddddd}
%\gamma:=\frac{c}{\sqrt{n}},
%\end{equation}
for a constant $c>0$ and $\delta\in (0,1/\gamma^{2})$ s.t. \eqref{Eq:aiaiahhfggfdhdgdhdhdgsgss} is fulfilled. It holds:
\begin{equation}
\label{Eq:aoaoshshggsgsgsffdgdd}
\begin{split}
\Erw\left[ \ACV(n)\right]\leq(\delta c+\tfrac{1}{c}) A\sqrt{n}, 
\end{split}
\end{equation}
where $A:=\Delta\psi+(1+\sqrt{(1+\delta c^{2})})\norm{\lambda_{*}}_{2}+
(\tilde{C}_{1}+\sigma)c$.
%\begin{equation*}
%A:=\Delta\psi+(1+\sqrt{(1+\delta c^{2})})\norm{\lambda_{*}}_{2}+
%(\tilde{C}_{1}+\sigma)c.
%\end{equation*}
\end{corollary}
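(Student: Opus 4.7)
The plan splits into two parts, corresponding to \eqref{Eq:aiaiahssgsghshhsgsgs2} and \eqref{Eq:aoaoshshggsgsgsffdgdd}.

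For \eqref{Eq:aiaiahssgsghshhsgsgs2}, I would start from Theorem \ref{Thm:oajjshhdggfggfhhdggdhhd3}. Under the persistence assumption $\Erw[\norm{\xi_k}^2_\infty]\le \sigma^2/(4m_*N)$, the last summand on the right of \eqref{Eq:aiaiahssgsghshhsgsgs} is controlled by $\gamma^2 n\sigma^2/2$, so that
\begin{equation*}
\Erw[\norm{\Lambda(n)-\lambda_*}_2^2]\le \Delta\psi^2+(1+\alpha n)\norm{\lambda_*}_2^2 + (\tilde C_1^2+\sigma^2)\gamma^2 n .
\end{equation*}
Jensen's inequality passes the square root through the expectation, and the subadditivity of $\sqrt{\cdot}$ together with $\sqrt{\tilde C_1^2+\sigma^2}\le \tilde C_1+\sigma$ yields $\Erw[\norm{\Lambda(n)-\lambda_*}_2]\le \Delta\psi+\sqrt{1+\alpha n}\,\norm{\lambda_*}_2+(\tilde C_1+\sigma)\gamma\sqrt{n}$. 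Adding $\norm{\lambda_*}_2$ via the triangle inequality gives \eqref{Eq:aiaiahssgsghshhsgsgs2}.

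The second, more interesting part is relating $\norm{\Lambda(n)}_2$ to $\ACV(n)$, since the $(1-\alpha)$ memory term in \eqref{Eq:aiaiahshshsss} prevents a naive one-line telescoping. Dropping the $[\,\cdot\,]_+$ in \eqref{Eq:aiaiahshshsss} gives $\Lambda_r(k+1)-(1-\alpha)\Lambda_r(k)\ge \beta\Gamma_r(\mu(k))$. Summing over $k=0,\dots,n-1$ and rearranging, using $\Lambda_r(0)=0$, produces the key inequality
\begin{equation*}
\Lambda_r(n)+\alpha\sum_{k=1}^{n-1}\Lambda_r(k)\ge \beta\sum_{k=0}^{n-1}\Gamma_r(\mu(k)) .
\end{equation*}
Since the left-hand side is non-negative, the same bound holds with $[\,\cdot\,]_+$ on the right. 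Taking the $\ell_2$-norm over $r$ and applying Minkowski's inequality yields $\beta\,\ACV(n)\le \norm{\Lambda(n)}_2+\alpha\sum_{k=1}^{n-1}\norm{\Lambda(k)}_2$.

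Finally, I would substitute $\gamma=c/\sqrt n$ (so $\beta=\gamma=c/\sqrt n$ and $\alpha=\delta c^2/n$). Observing that the bound \eqref{Eq:aiaiahssgsghshhsgsgs2} applied at any horizon $k\le n$ gives $\Erw[\norm{\Lambda(k)}_2]\le A$ (since $\alpha k\le\alpha n=\delta c^2$ and $\gamma\sqrt{k}\le c$), taking expectation in the previous display produces $\Erw[\beta\,\ACV(n)]\le A(1+\alpha(n-1))\le A(1+\delta c^2)$. Dividing by $\beta=c/\sqrt{n}$ gives $\Erw[\ACV(n)]\le (1/c+\delta c)A\sqrt{n}$, which is \eqref{Eq:aoaoshshggsgsgsffdgdd}. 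The only non-routine step is the derivation of the telescoping-type inequality for the prices; the rest are routine applications of Jensen and Minkowski together with a careful bookkeeping of the parameter dependencies introduced by $\gamma=c/\sqrt n$.
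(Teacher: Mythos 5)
Your argument is correct and follows essentially the same route as the paper: the first bound is obtained exactly as in the paper (persistence of the noise plugged into Theorem \ref{Thm:oajjshhdggfggfhhdggdhhd3}, then Jensen, subadditivity of the square root, and the triangle inequality), and your ``key inequality'' $\beta\,\ACV(n)\le \norm{\Lambda(n)}_2+\alpha\sum_{k=1}^{n-1}\norm{\Lambda(k)}_2$ is precisely the paper's Lemma \ref{Lem:UpperBoundbyPrice}, which you re-derive by the same telescoping-plus-Minkowski argument used in the appendix. The final substitution $\gamma=c/\sqrt{n}$ and the uniform bound $\Erw[\norm{\Lambda(k)}_2]\le A$ for $k\le n$ also mirror the paper's bookkeeping, so there is nothing to correct.
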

\section{Simulation}
\label{Sec:aiaisshshssggsss}
% Set up Simulation Framework
%\todo[inline]{ET: Improve the connection to Example \ref{Ex:Network Routing Game}}
%\todo[inline]{ET: Write short how the graph is generated}
\textbf{Game Setting:} We consider the network routing problem given in Example \ref{Ex:Network Routing Game} which we specify as follows: $\mathcal{V}$ consists of 15 nodes and $\mathcal{E}$ is built from a randomly generated adjacency matrix (without self-loop) with independent entries, where each non-diagonal is $1$ with probability $0.5$. Furthermore, we consider $N = 10$ agents, each has the starting point and the destination randomly uniformly chosen from $\mathcal{V}$. Given the latter, each agent $i$ has randomly created bundles of maximal size $|\mathcal{P}_i| \leq 10$. We set the total resource load $m_i = 20$, $\forall i \in [N]$, and the admissible flow per resource $L_r = 14$, $\forall r \in \mathcal{R}$. For the cost per resource $\ell_{r,k}$, we consider a quadratic polynomial of the form $\ell_{r,k}(\phi_r(k)) = a_2^{(r)}\phi_r(k)^2+a_1^{(r)}\phi_r(k)+a^{(r)}_0$, where the coefficients $(a_2^{(r)},a_1^{(r)},a_0^{(r)})$ for each resource $r \in \mathcal{R}$ are independently randomly uniformly chosen from $[0, 0.05]$.
%\todo[inline]{ET: Write that there is no significant difference between the plot of
% $\left\|\left[\sum_{\tau = 1}^k \Gamma_r(\mu(k))\right]_+\right\|_1$ and $\ACV(k)$. Interpretation: No significant constraint outlier?} 

\textbf{Parameter Setting:} We set the parameters required by Algorithms 1 and 2 as follows: We consider the time horizon $n=10^{3}$, the agents' learning rate $\gamma=0.1\sqrt{n}=0.0032$, and the response parameter $\alpha=10^{-5}$. We are not only interested in the case $\beta=\gamma$ analyzed in Section \ref{Sec:PerfAn}, but also in the case where the regulator is uncertain about the agents' learning rate, and therefore $\beta$ differs significantly from $\gamma$ by the factor 10: $\beta=10\gamma$ ($\beta>\gamma$) and $\beta=10^{-1}\gamma$ ($\beta<\gamma$). For the noise modeling w.r.t. the disturbed cost we consider uniformly distributed random i.i.d. samples between [-0.01,0.01]. 
%\begin{figure} [!h]
%\centering
%\scalebox{.4}{\input{ACV.tex}}
%\caption{ ACV ($L_r = 14$) }
%\label{fig:ACV}
%\end{figure}
%
%\begin{figure} [!h]
%\centering
%\scalebox{.4}{\input{AACV.tex}}
%\caption{ACV averaged over time ($L_r = 14$)}
%\label{fig:AACV}
%\end{figure}
%
%\begin{figure} [!h]
%\centering
%\scalebox{.4}{\input{AD.tex}}
%\caption{AD averaged over time ($L_r = 14$)}
%\label{fig:AD}
%\end{figure}

\textbf{Performance Evaluation:} Fig. \ref{fig:Perf} shows that our pricing mechanism reduces the aggregated capacity violation even if $\beta\neq \gamma$ since the $\ACV$ for each of the parameter choices is significantly lower than $\ACV$ of purely anarchistic case (red,dashed). However, we observe that a higher $\beta$ may accelerate this process. Additionally, we see that our pricing method does not yield significant discrimination of the agents, when compared to the improvement of the capacity violation, as the differences between the aggregated delays for the different cases are marginal at worst (see Fig. \ref{fig:Perf} (b)). Still, we note a trade-off behavior in the choice of $\beta$: In case that $\beta$ is high ($\beta >\gamma$), the capacity violation is the lowest, but the experienced delay the highest.\begin{figure}[htbp]
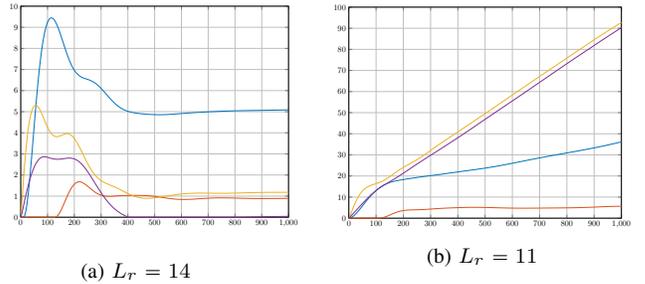

	\begin{minipage}{.48\linewidth}
		\centering
		\scalebox{.31}{\input{price.tex}}
		%\input{price2.tex}
		% \vspace{1.5cm}
		\centerline{\footnotesize(a) $L_{r}=14$}
	\end{minipage}
	\hfill
	\begin{minipage}{0.48\linewidth}
		\centering
		\scalebox{.31}{\input{price3.tex}}
		\centerline{\footnotesize(b) $L_{r}=11$}
	\end{minipage}
	\caption{Pricing over time}
	\label{fig:Perf3}
\end{figure} This occurrence reflects the increasing dominance of the price regulation over the agents' personal interest to decrease the incurred delay.
 Another observation which we make is that if $\beta=\gamma$, some prices might at worst be constant for large times as predicted in Corollary \ref{Corr:aiaahssgsgshhsssddd}, indicating that even if the population fulfills resource constraints, a control mechanism is necessary to maintain this desired status quo.  %Therefore, even though asymptotically our performance measures exhibit a trade-off behaviour, closer observation of the time averaged AD suggests, that for a finite time horizon a sophisticated choice of the learning rate may yield an improvement w.r.t. both objectives. 
%\begin{figure} [!h]
%\centering
%\scalebox{.4}{\input{price2.tex}}
%\caption{ Pricing over time ($L_r = 14$)}
%\label{fig:Price}
%\end{figure}
%
%\begin{figure} [!h]
%\centering
%\scalebox{.4}{\input{price3.tex}}
%\caption{ Pricing over time ($L_r = 11$)}
%\label{fig:Price2}
%\end{figure}
%
%\begin{figure} [!h]
%\centering
%\scalebox{.4}{\input{ACV2.tex}}
%\caption{ ACV ($L_r = 11$) }
%\label{fig:ACV2}
%\end{figure}
%
%\begin{figure} [!h]
%\centering
%\scalebox{.4}{\input{AACV2.tex}}
%\caption{ACV averaged over time ($L_r = 11$)}
%\label{fig:AACV2}
%\end{figure}
%
%\begin{figure} [!h]
%\centering
%\scalebox{.4}{\input{AD2.tex}}
%\caption{AD averaged over time ($L_r = 11$)}
%\label{fig:AD2}
%\end{figure}

% Interpretation:
% Simulation results show for all beta a reduction of the capacity violation over time (fig 1 and fig 2) x
% Still no agent is affected adversely by the pricing (fig 3)
% Note that high beta has best performance 
% But: Counterexample -> set admissible flow lower

% TO DO:
% Create figures for L_r < 15
%Insert figures ...
% Interpretation:
% Again for all beta a reduction of the capacity violation over time
% But: for high beta -> a lot more disadvantages for the agents
% ->  Usually a trade-off behaviour between both objectives, but possible to improve both for L_r high enough 

%\pagebreak
\textbf{Overly Strict Capacity Constraints:} We also investigate the performance of our method with stricter capacity constraints, i.e. $L_{r}=11$. We see that our method still yields an improvement of the capacity violation compared to the no pricing case (see Fig. \ref{fig:Perf2}). However, this comes with a significant reduction of agents' welfare in the form of a higher $\AD$ (see Fig. \ref{fig:Perf2} (a)). One may justify this as follows: Taking a look at the pricing evolution (Fig. \ref{fig:Perf3} (b)) of exemplary resources, we observe a linear increase in prices dominating the personal preferences ($\hat{\ell}^{(i)}_{\mathcal{P}_{i}}$ in \eqref{Eq:Update}) of the agents in large times. Consequently, each of the affected agents decides for routes that have the lower prices rather than those that incur the lowest delay. 

The enormous increase of prices shown in Fig. \ref{fig:Perf3} (b) gives a hint that the minimizer of the Rosenthal potential corresponding to the network routing game over $\mathcal{Q}$ does not exist (c.f. the Proof of Theorem \eqref{Thm:oajjshhdggfggfhhdggdhhd3}) due to overly strict resource constraints. However, one may able to show the sub-linearity of $\ACV$ to be of order $\mathcal{O}(n^{1/4})$. Moreover, the increase in prices is in contrast to the case where the capacity constraints are rather loose (Fig. \ref{fig:Perf3} (a)). The latter observations give the following heuristic: In case that one observes a linear increase of some prices, one may set a looser constraint so that the reduction of the capacity violations does not come with a significant reduction of the populations' welfare.

\section{Summary, Discussion, and Future Work}
Assuming that the agents are choosing their action based on the average historical cost of the resource bundles 
and the logit choice rule, we introduced a resource-centric pricing mechanism which allows a non-asymptotic guarantee of the sub-linear growth 
of the expected aggregated violation of the resource constraints of order $\mathcal{O}(\sqrt{n})$.
\begin{figure}[htbp]
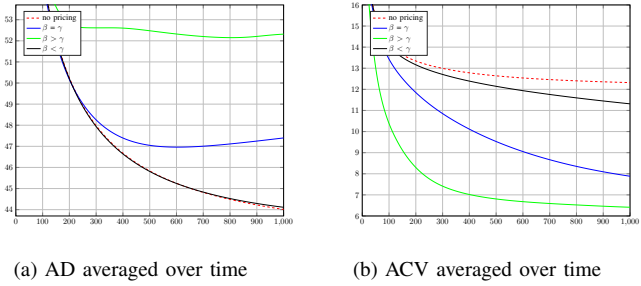

	
	%\begin{minipage}[b]{1.0\linewidth}
	%	\centering
	%	\scalebox{.31}{\input{ACV2.tex}}
	%	%  \vspace{2.0cm}
	%	\centerline{\footnotesize(a) ACV}
	%\end{minipage}
	%
	\begin{minipage}{.48\linewidth}
		\centering
		\scalebox{.31}{\input{AD2.tex}}
		%  \vspace{1.5cm}
		\centerline{\footnotesize(a) AD averaged over time}
	\end{minipage}
	\hfill
	\begin{minipage}{0.48\linewidth}
		\centering
		\scalebox{.31}{\input{AACV2.tex}}
		\centerline{\footnotesize(b) ACV averaged over time}
	\end{minipage}
	\caption{Performance for $L_{r}=11$}
	\label{fig:Perf2}
\end{figure}
 In case that the resource constraints are not overly strict, we observe numerically that the resource sustainability delivered by our method, does not come with significant discrimination of the agents. For the general case, trade-off effect between resource sustainability and population's welfare might occur. In the future, we plan to explain these aspects formally. 
\bibliographystyle{IEEEtran}
\bibliography{BibWardGlob}

% Generated by IEEEtran.bst, version: 1.12 (2007/01/11)
\begin{thebibliography}{10}
\providecommand{\url}[1]{#1}
\csname url@samestyle\endcsname
\providecommand{\newblock}{\relax}
\providecommand{\bibinfo}[2]{#2}
\providecommand{\BIBentrySTDinterwordspacing}{\spaceskip=0pt\relax}
\providecommand{\BIBentryALTinterwordstretchfactor}{4}
\providecommand{\BIBentryALTinterwordspacing}{\spaceskip=\fontdimen2\font plus
\BIBentryALTinterwordstretchfactor\fontdimen3\font minus
  \fontdimen4\font\relax}
\providecommand{\BIBforeignlanguage}[2]{{%
\expandafter\ifx\csname l@#1\endcsname\relax
\typeout{** WARNING: IEEEtran.bst: No hyphenation pattern has been}%
\typeout{** loaded for the language `#1'. Using the pattern for}%
\typeout{** the default language instead.}%
\else
\language=\csname l@#1\endcsname
\fi
#2}}
\providecommand{\BIBdecl}{\relax}
\BIBdecl

\bibitem{Evans2011}
D.~Evans, ``{T}he {I}nternet of {T}hings: {H}ow the {N}ext {E}volution of the
  {I}nternet is {C}hanging {E}verything,'' CISCO, Tech. Rep., April 2011.

\bibitem{Pickavet2008}
M.~{Pickavet}, W.~{Vereecken}, S.~{Demeyer}, P.~{Audenaert}, B.~{Vermeulen},
  C.~{Develder}, D.~{Colle}, B.~{Dhoedt}, and P.~{Demeester}, ``Worldwide
  energy needs for {ICT}: {T}he rise of power-aware networking,'' in \emph{2nd
  IEEE ANTS}, Dec. 2008, pp. 1--3.

\bibitem{Rosenthal1973}
R.~W. Rosenthal, ``A class of games possessing pure-strategy nash equilibria,''
  \emph{Int. J. of Game Th.}, vol.~2, no.~1, pp. 65--67, Dec 1973.

\bibitem{Schmeidler1973}
D.~Schmeidler, ``Equilibrium points of nonatomic games,'' \emph{J. of Stat.
  Phy.}, vol.~7, no.~4, pp. 295--300, Apr 1973.

\bibitem{Wardrop1952}
J.~G. Wardrop, ``{S}ome {T}heoretical {A}spects of {R}oad {T}raffic
  {R}esearch.'' \emph{Proc. of the Inst. of Civ. Eng.}, vol.~1, no.~3, pp.
  325--362, 1952.

\bibitem{Ercetin2008}
O.~{Ercetin}, ``Association games in {IEEE} 802.11 wireless local area
  networks,'' \emph{IEEE Trans. on Wire. Comm.}, vol.~7, no.~12, pp. 5136 --
  5143, Dec. 2008.

\bibitem{Chen2010}
L.~{Chen}, ``A {D}istributed {A}ccess {P}oint {S}election {A}lgorithm {B}ased
  on {N}o-{R}egret {L}earning for {W}ireless {A}ccess {N}etworks,'' in
  \emph{IEEE 71st Veh. Tech. Conf.}, 2010, pp. 1 -- 5.

\bibitem{Altman2009}
E.~Altman, A.~Kumar, and Y.~Hayel, ``A potential game approach foruplink
  resource allocation in a multichannel wireless access network,'' in \emph{4th
  Int. ICST Conf. on Perf. Eval. Meth. and Tools}, 2009.

\bibitem{Sanyal2010}
D.~Sanyal, S.~Chakraborty, M.~Chattopadhyay, and S.~Chattopadhyay, ``Congestion
  games in wireless channels with multipacket reception capability,'' in
  \emph{Information and Communication Technologies}, 2010, pp. 201--205.

\bibitem{Argento2010}
A.~{Argento}, M.~{Cesana}, and I.~{Malanchini}, ``On access point association
  in wireless mesh networks,'' in \emph{2010 IEEE WoWMoM}, 2010, pp. 1--6.

\bibitem{Sandholm2001}
W.~H. Sandholm, ``Potential games with continuous player sets,'' \emph{J. of
  Econ. Th.}, vol.~97, no.~1, pp. 81 -- 108, 2001.

\bibitem{Fischer2004}
S.~Fischer and B.~V{\"o}cking, ``On the evolution of selfish routing,'' in
  \emph{ESA}.\hskip 1em plus 0.5em minus 0.4em\relax Springer, 2004, pp.
  323--334.

\bibitem{Blum2006}
A.~Blum, E.~Even-Dar, and K.~Ligett, ``Routing {W}ithout {R}egret: {O}n
  {C}onvergence to {N}ash {E}quilibria of {R}egret-{M}inimizing {A}lgorithms in
  {R}outing {G}ames,'' in \emph{Proc. of the 25th Ann. ACM Symp. on Princ. of
  Dist. Comp.}, 2006, pp. 45 -- 52.

\bibitem{Krichene2015}
W.~Krichene, B.~Drigh{\'e}s, and A.~Bayen, ``Online {L}earning of {N}ash
  {E}quilibria in {C}ongestion {G}ames,'' \emph{SIAM J. on Cont. and Opt.},
  vol.~53, no.~2, pp. 1056--1081, 2015.

\bibitem{Roughgarden2002}
T.~Roughgarden and \'{E}va Tardos, ``How bad is selfish routing?'' \emph{J.
  ACM}, vol.~49, no.~2, pp. 236--259, 2002.

\bibitem{Facchinei2007}
F.~Facchinei and C.~Kanzow, ``Generalized {N}ash equilibrium problems,''
  \emph{4OR}, vol.~5, no.~3, pp. 173--210, Sep 2007.

\bibitem{Scutari2012}
G.~Scutari, D.~P. Palomar, F.~Facchinei, and J.-S. Pang, \emph{Distributed
  Decision Making and Control}, ser. Lecture Notes in Control and Information
  Sciences.\hskip 1em plus 0.5em minus 0.4em\relax Springer, 2012, ch. Monotone
  Games for Cognitive Radio Systems, pp. 83--112.

\bibitem{Alpcan2002}
T.~Alpcan and T.~Basar, ``A game-theoretic framework for congestion control in
  general topology networks,'' in \emph{Proc. 41th IEEE CDC}, vol.~2, 2002, pp.
  1218--1224.

\bibitem{Scutari2006}
G.~{Scutari}, S.~{Barbarossa}, and D.~P. {Palomar}, ``Potential {G}ames: {A}
  {F}ramework for {V}ector {P}ower {C}ontrol {P}roblems {W}ith {C}oupled
  {C}onstraints,'' in \emph{IEEE ICASSP}, vol.~4, 2006.

\bibitem{Ozdaglar2007}
A.~Ozdaglar and R.~Srikant, ``Incentives and pricing in communication
  networks,'' in \emph{In Algorithmic Game Theory}.\hskip 1em plus 0.5em minus
  0.4em\relax Cambridge Press, 2007, pp. 571--591.

\bibitem{Farokhi2015}
F.~Farokhi and K.~H. Johansson, ``A piecewise-constant congestion taxing policy
  for repeated routing games,'' \emph{Transportation Research Part B:
  Methodological}, vol.~78, pp. 123 -- 143, 2015.

\bibitem{Barrera2015}
J.~Barrera and A.~Garcia, ``Dynamic {I}ncentives for {C}ongestion {C}ontrol,''
  \emph{IEEE Trans. on Aut. Cont.}, vol.~60, no.~2, pp. 299 -- 310, Feb. 2015.

\bibitem{Paccagnan2017}
D.~Paccagnan, B.~Gentile, F.~Parise, M.~Kamgarpour, and J.Lygeros, ``Nash and
  {W}ardrop {E}quilibria in {A}ggregative {G}ames with {C}oupling
  {C}onstraints,'' \emph{IEEE Trans. on Aut. Cont.}, 2017.

\bibitem{Monderer1996}
D.~Monderer and L.~S. Shapley, ``Potential games,'' \emph{Games Econ. Beh.},
  vol.~14, pp. 124 -- 143, May 1996.

\bibitem{Mertikopoulos2018}
P.~Mertikopoulos and Z.~Zhou, ``Learning in games with continuous action sets
  and unknown payoff functions,'' \emph{Math. Prog.}, Mar. 2018.

\bibitem{Tamp2020Cong}
E.~Tampubolon, H.~Ceribasic, and H.~Boche, ``Resource-aware control via dynamic
  pricing for congestion game with finite-time guarantees,'' \emph{arXiv
  e-prints}, 2020.

\end{thebibliography}
\appendix
\subsection{Additional Notations}
\begin{itemize}
\item $\widetilde{\M}^{(i)}=m_{i}\M^{(i)}$ 
\item One can express $\phi(\mu)$ more compactly by $\phi(\mu)=\widetilde{\M}\mu$, where $\widetilde{\M}=[\widetilde{\M}^{(1)}|\cdots|\widetilde{\M}^{(N)}]$
\end{itemize}
\subsection{Auxiliary Statements}
\begin{lemma} 
	\label{Lem:UpperBoundbyPrice}
	Suppose that $\Lambda_{0}=0$.
	For all $r\in [R]$ and $k\in\nat$:
	\begin{equation}
	\label{Eq:aaisjjshshsgdgdd}
	\ACV(k) \leq \frac{\norm{\Lambda(k)}_{2}+\alpha\sum_{\tau=1}^{k-1}\norm{\Lambda(\tau)}_{2}}{\beta}
	\end{equation}
\end{lemma}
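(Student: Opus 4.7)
The plan is to exploit the monotonicity of the positive-part operator, namely $[x]_+\geq x$, to convert the price recursion \eqref{Eq:aiaiahshshsss} into a one-step inequality that upper-bounds the instantaneous constraint violation $\Gamma_r(\mu(\tau))$ in terms of price increments, then telescope.

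First I would drop the positive part in the price update. Since $[y]_+\geq y$ for any real $y$, the recursion \eqref{Eq:aiaiahshshsss} implies component-wise
\begin{equation*}
\Lambda_{r}(\tau+1)\geq (1-\alpha)\Lambda_{r}(\tau)+\beta\,\Gamma_{r}(\mu(\tau)).
\end{equation*}
Rearranging yields the scalar bound
\begin{equation*}
\beta\,\Gamma_{r}(\mu(\tau))\leq \Lambda_{r}(\tau+1)-\Lambda_{r}(\tau)+\alpha\,\Lambda_{r}(\tau),
\end{equation*}
which is the key ingredient: it isolates $\Gamma_{r}$ between a telescoping term and a memory term.

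Next I would sum this inequality over $\tau=0,\ldots,k-1$. The telescoping sum collapses to $\Lambda_{r}(k)-\Lambda_{r}(0)=\Lambda_{r}(k)$ by the initialization $\Lambda_{0}=0$, and the memory term becomes $\alpha\sum_{\tau=0}^{k-1}\Lambda_{r}(\tau)=\alpha\sum_{\tau=1}^{k-1}\Lambda_{r}(\tau)$ (again using $\Lambda_{r}(0)=0$). Thus
\begin{equation*}
\beta\sum_{\tau=0}^{k-1}\Gamma_{r}(\mu(\tau))\leq \Lambda_{r}(k)+\alpha\sum_{\tau=1}^{k-1}\Lambda_{r}(\tau).
\end{equation*}
Because the right-hand side is non-negative (prices are non-negative throughout by construction), I may apply $[\,\cdot\,]_{+}$ to the left side without loss, obtaining the same bound for $\beta\bigl[\sum_{\tau=0}^{k-1}\Gamma_{r}(\mu(\tau))\bigr]_{+}$.

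Finally, I would take the Euclidean norm over the resource index $r\in\mathcal{R}$ and invoke the triangle inequality for $\|\cdot\|_{2}$ on the right-hand side to split $\Lambda(k)$ from $\alpha\sum_{\tau=1}^{k-1}\Lambda(\tau)$, yielding
\begin{equation*}
\beta\,\ACV(k)\leq \|\Lambda(k)\|_{2}+\alpha\sum_{\tau=1}^{k-1}\|\Lambda(\tau)\|_{2},
\end{equation*}
and dividing by $\beta$ gives \eqref{Eq:aaisjjshshsgdgdd}. There is essentially no obstacle here; the only subtlety is checking that non-negativity of $\Lambda_{r}(\tau)$ (which holds by induction since the update is a positive-part operator and $\Lambda_{0}=0$) lets one pass freely between the raw sum and its positive part, and noting that the reindexing $\sum_{\tau=0}^{k-1}=\sum_{\tau=1}^{k-1}$ is valid precisely because $\Lambda(0)=0$.
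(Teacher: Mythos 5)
Your proposal is correct and follows essentially the same route as the paper's own proof: drop the positive part in the update to get the one-step inequality, telescope using $\Lambda_0=0$, reinstate $[\cdot]_+$ on the left since the right-hand side is non-negative, and finish by taking norms componentwise and applying the triangle inequality. No gaps.
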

\subsubsection{Proof of Lemma \ref{Lem:UpperBoundbyPrice}}
\label{Subsec:Lema}
%We consider the operator $\tilde{v}:\mathcal{X}\times\real^{M}_{\geq 0},\rightarrow \real^{D+M}$ defined by:
%\begin{equation}
%\label{Eq:aoaosshjdhdjshsshed}
%(x,\lambda)\mapsto \left(v(x)-\sum_{i=1}^{M}\lambda^{(i)}\nabla g_{i}(x),g(x)\right)^{T}.
%\end{equation}
%\todo[inline]{HC to ET: Proof corrected, maybe reformulate proposition}
%\begin{proof}[Proof of Lemma \ref{Lem:UpperBoundbyPrice}] 
The definition of our price policy gives: %$\Lambda^{r}_{k+1}\geq\Lambda^{r}_{k}+\gamma_{k}g_{r}(X_{k})-\gamma_{k}\alpha_{k}\Lambda^{r}_{k}$.
\begin{align*}
\Lambda_{r}(\tau+1)\geq\Lambda_{r}(\tau)+\beta\Gamma_{r}(\tau)-\alpha\Lambda_{r}(\tau).
\end{align*}
So by summing this inequality and subsequent telescoping, we have since $\Lambda_{0}=0$: $$\sum_{\tau=0}^{k-1}\Gamma_{r}(\tau) \leq \frac{\Lambda_{r}(k)+\sum_{\tau=1}^{k-1}\alpha\Lambda_{r}(\tau)}{\beta}.$$ 
% Now, this inequality
%\begin{equation*}
%\sum_{k=0}^{n-1}\gamma_{k}g_{r}(X_{k})\leq \Lambda_{n}^{r}+\sum_{k=0}^{n-1}\gamma_{k}\alpha_{k}\Lambda_{k}^{r}.
%\end{equation*}
% and the fact that $\Lambda^{r}_{k}\geq 0$ give % $\Lambda_{k}^{r}\leq\sqrt{\sum_{r=1}^{R}(\Lambda_{k}^{r})^{2}}=\norm{\Lambda_{k}}_{2}$.
%:
%\begin{equation*}
%\Lambda_{k}^{r}\leq\sqrt{\sum_{r=1}^{R}(\Lambda_{k}^{r})^{2}}=\norm{\Lambda_{k}}_{2}.
%\end{equation*}
% Therefore:
% \begin{equation*}
% \sum_{k=0}^{n-1}\Gamma_{r,k}\leq \frac{\norm{\Lambda_{n}}_{2}+\sum_{k=0}^{n-1}\alpha\norm{\Lambda_{k}}_{2}}{\beta}.
% \end{equation*}
We observe that the inequality also holds when applying $[\cdot]_{+}$ to the L.H.S. due to the R.H.S. of the inequality being non-negative and since $[\cdot]_{+}$ is monotonically increasing. Moreover, since the resulting inequality holds for all $r\in [R]$ and both sides of the equation are positive it follows using monotonicity:
\begin{equation*}
\left\|\left[\sum_{\tau = 0}^{k-1} \Gamma(\tau)\right]_+\right\|_2 \leq \left\| \frac{\Lambda(k)+\sum_{\tau=1}^{k-1}\alpha\Lambda(\tau)}{\beta}\right\|_2
\end{equation*}
We notice, that the L.H.S. corresponds to our definition of the aggregated capacity violation at stage $k-1$. Therefore applying the triangle inequality yields the upper-bound \eqref{Eq:aaisjjshshsgdgdd} and (subsequently an approximation) for the aggregated capacity violation purely based on the price.
%\end{proof}
\subsection{Monotonicity of the KKT-operators}
\label{Subsec:Monotone}
An operator $F:\real^{D}\rightarrow\real^{D}$ is said to be monotone on $\mathcal{Z}\subseteq\real^{D}$ if $\inn{x_{1}-x_{2}}{F(x_{1})-F(x_{2})}\leq 0$, for all $x_{1},x_{2}\in \mathcal{Z}$. If in the latter strict inequality hold for $x_{1}\neq x_{2}$, then $F$ is said to be strictly monotone. $F$ is said to be $c$-strongly monotone on $\mathcal{Z}$ if $\inn{x_{1}-x_{2}}{F(x_{1})-F(x_{2})}\leq -c \norm{x_{1}-x_{2}}^{2}$, for all $x_{1},x_{2}\in \mathcal{Z}$. 
\begin{proposition}
\label{Prop:aiajshsgdggdgdgdhsgss}
Let be $\X\subset \real^{D}$.
Consider the operator $\tilde{v}:\mathcal{X}\times \real^{M}_{\geq 0}\rightarrow \real^{D}\times\real^{M}$ given by:
\begin{equation}
\label{Eq:iaaiisjsjdhhdhdhdhdjjsss}
(x,\lambda)\mapsto \left[v(x)+A^{\T}\lambda, b-Ax\right]^{\T},
\end{equation}
where $v:\X\rightarrow\real^{D}$, $A\in\real^{M\times D}$, and $b\in\real^{M}$. It holds:
\begin{equation}
\label{Eq:aiaiahshjshdhjddd}
\inn{\alpha_{1}-\alpha_{2}}{\tilde{v}(\alpha_{1})-\tilde{v}(\alpha_{2})}=\inn{x_1-x_{2}}{v(x_{1})-v(x_{2})},
\end{equation}
for all $\alpha_{i}:=(x_{i},\lambda_{i})\in\mathcal{X}\times \real^{M}_{\geq 0}$, $i=1,2$. %Consequently if $v$ is (resp. strictly) monotone then $v$ is (resp. strictly) monotone. Moreover if $v$ is $c$-strongly monotone, then :
%\begin{equation*}
%\inn{\alpha_{1}-\alpha_{2}}{\tilde{v}(\alpha_{1})-\tilde{v}(\alpha_{2})}\leq-\frac{c}{2}\norm{x_{1}-x_{2}}^{2},
%\end{equation*}
\end{proposition}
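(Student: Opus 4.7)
The proof is a direct algebraic verification, and the key observation is that the cross-coupling terms between the primal and dual components cancel by the adjoint relation $\langle x, A^{\T}\lambda\rangle = \langle Ax,\lambda\rangle$. So the plan is simply to expand the left-hand side of \eqref{Eq:aiaiahshjshdhjddd} according to the block structure of $\tilde{v}$ and observe that everything except the $\langle x_1 - x_2, v(x_1) - v(x_2)\rangle$ term vanishes.

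Concretely, I would first write
\begin{equation*}
\tilde{v}(\alpha_1) - \tilde{v}(\alpha_2) = \bigl(v(x_1) - v(x_2) + A^{\T}(\lambda_1 - \lambda_2),\; -A(x_1 - x_2)\bigr)
\end{equation*}
and $\alpha_1 - \alpha_2 = (x_1 - x_2,\; \lambda_1 - \lambda_2)$. Then I would compute the inner product on $\real^{D}\times\real^{M}$ as the sum of the two block inner products, getting
\begin{equation*}
\langle x_1 - x_2, v(x_1) - v(x_2)\rangle + \langle x_1 - x_2, A^{\T}(\lambda_1 - \lambda_2)\rangle - \langle \lambda_1 - \lambda_2, A(x_1 - x_2)\rangle.
\end{equation*}
The second and third terms cancel because $\langle x, A^{\T}\lambda\rangle = \langle Ax,\lambda\rangle$, which gives the desired identity.

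There is no real obstacle here: the result is essentially the well-known fact that the skew-symmetric saddle-point coupling $\begin{pmatrix} 0 & A^{\T} \\ -A & 0\end{pmatrix}$ contributes nothing to the monotonicity form. An immediate corollary, which is presumably what the authors will exploit in the proof of Theorem \ref{Thm:oajjshhdggfggfhhdggdhhd3}, is that $\tilde{v}$ inherits monotonicity/strict monotonicity/strong monotonicity from $v$ on the primal variable alone, so the KKT-type operator arising from the Lagrangian of the capacity-constrained problem is monotone whenever $v$ is.
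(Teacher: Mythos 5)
Your proof is correct and is essentially identical to the paper's: both expand the block inner product and cancel the cross terms $\inn{x_1-x_2}{A^{\T}(\lambda_1-\lambda_2)}$ and $\inn{\lambda_1-\lambda_2}{A(x_1-x_2)}$ via the adjoint identity. Nothing further is needed.
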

\begin{proof}
Straightforward computations yields:
\begin{align*}
&\inn{\alpha_{1}-\alpha_{2}}{\tilde{v}(\alpha_{1})-\tilde{v}(\alpha_{2})}=\inn{x_{1}-x_{2}}{v(x_{1})-v(x_{2})}\\
&+\inn{x_{1}-x_{2}}{A^{T}\lambda_{1}-A^{T}\lambda_{2}}-\inn{\lambda_{1}-\lambda_{2}}{Ax_{1}-A x_{2}}.
\end{align*}
Moreover, we have:
\begin{align*}
\inn{\lambda_{1}-\lambda_{2}}{Ax_{1}-Ax_{2}}=\inn{A^{T}\lambda_{1}-A^{T}\lambda_{2}}{x_{1}-x_{2}},
\end{align*}
Combining both computations, we obtain \eqref{Eq:aiaiahshjshdhjddd}.
%. Now, if $\tilde{v}$ is (resp. strictly) monotone, then we have $\inn{x_1-x_{2}}{v(x_{1})-v(x_{2})}\leq 0$ (resp, $<0$). this observation and \eqref{Eq:aiaiahshjshdhjddd} gives the last statement. Similar argumentation gives the remaining arguments 
\end{proof}
\subsection{Proof of the main result}
\begin{proof}[Proof of Theorem \ref{Thm:oajjshhdggfggfhhdggdhhd3}]
The logit choice $\Phi^{(i)}$ given in \eqref{Eq:aaooshshgdggdhhdgddhs2} is a mirror map (Definition 3.1 in \cite{Mertikopoulos2018}) induced by the negative Gibbs entropy $\psi_{i}(\mu^{(i)})=\sum_{\mathcal{P}_{i}\in\mathfrak{P}_{i}}\mu^{(i)}_{\mathcal{P}_{i}}\ln(\mu^{(i)}_{\mathcal{P}_{i}})$
%\begin{equation*}
%\psi_{i}(\mu^{(i)})=\sum_{\mathcal{P}_{i}\in\mathfrak{P}_{i}}\mu^{(i)}_{\mathcal{P}_{i}}\ln(\mu^{(i)}_{\mathcal{P}_{i}})
%\end{equation*}
	as regularizer on the simplex which is a compact convex subset. Let be $F^{\mathbf{m}}(\mu,Y(k)):=\sum_{i=1}^{N}m_{i}F_{i}(\mu^{(i)},Y^{(i)}(k))$ where $F_{i}$ is the Fenchel coupling (Definition 4.2 in \cite{Mertikopoulos2018}) induced by the negative Gibbs entropy as $1$-strongly (w.r.t. $\norm{\cdot}_{\infty}$) convex regularizer on the simplex $\Delta_{i}$. 

By means of $F^{\mathbf{m}}$, we can estimate the evolution of Algorithm 	\ref{Alg:aoaishhjddhhddddeee2} with the dynamic pricing mechanism given in Algorithm 	\ref{Alg:aoaishhjddhhddddeee} by means of Lyapunov's type argumentation. Toward this end, we use the usual bound for the one step difference of the Fenchel coupling (see e.g. Proposition 4.3 (c) in \cite{Mertikopoulos2018}), insert the given iterate at time $k+1$ in the resulted inequality, and apply triangle inequality, to obtain:
\begin{equation}
\label{Eq:aiaisjsshshshhsss}
\begin{split}
&F^{\mathbf{m}}(\mu,Y(k+1))-F^{\mathbf{m}}(\mu,Y(k))\\
&\leq-\gamma\underbrace{\sum_{i=1}^{N}m_{i}\inn{\mu^{(i)}(k)-\mu^{(i)}}{\hat{\ell}^{(i)}(k)+\pi^{(i)}(k)}}_{=:\text{(a)}}\\
%&~~~-\gamma\sum_{i=1}^{N}m_{i}\inn{\mu^{(i)}(k)-\mu^{(i)}}{\pi^{(i)}(k)}\\
%&-\gamma\sum_{i=1}^{N}m_{i}\inn{\mu^{(i)}(k)-\mu^{(i)}}{\xi^{(i)}(k+1)}\\
&+\frac{\gamma^{2}}{2}\underbrace{\sum_{i=1}^{N}m_{i}\norm{\hat{\ell}^{(i)}(k)+\pi^{(i)}(k)}_{\infty}^{2}}_{=:\text{(b)}}.
%&+\gamma^{2}\sum_{i=1}^{N}\frac{m_{i}}{2}\left(\norm{M^{(i),T}\Lambda(k)}_{\infty}^{2}+2(\norm{\ell^{(i)}(\mu(k))}_{\infty}^{2}+\norm{\xi_{k+1}}_{\infty}^{2})\right).
\end{split}
\end{equation}
By the triangle inequality and the definition of the constants given in Section \ref{Sec:PerfAn}, we can estimate the summand (b) as follows:
\begin{equation}
\label{Eq:aiaiashhshsjshshsssss}
\text{(b)}/2\leq C_{1}^{2}\norm{\Lambda(k)}_{2}^{2}+2(C_{2}^{2}+\sum_{i=1}^{N}m_{i}\norm{\xi^{(i)}_{k+1}}_{\infty}^{2})
\end{equation}
Now to estimate the summand (a),
%notice first that by \eqref{Eq:aoaoajsjsjhdhdjshshsss} and by the fact that $V$ is convex and thus $\inn{\mu(k)-\mu_{*}}{\nabla V(\mu(k))}\geq V(\mu(k))-V(\mu_{*})$, it follows:
%\begin{equation*}
%\begin{split}
%\sum_{i=1}^{N}m_{i}\inn{\mu^{(i)}(k)-\mu_{*}^{(i)}}{\ell^{(i)}(\mu(k))}%&=\inn{\mu(k)-\mu_{*}}{\nabla V(\mu(k))}\\
%\geq \Theta(k)
%\end{split}
%\end{equation*}
%notice that by using the notation $X(k):=(m_{i}\mu^{(i)}(k))_{i\in[N]}$ and $x_{*}:=(m_{i}\mu_{*}^{(i)})_{i\in[N]}$, 
%\begin{equation*}
%X(k):=(m_{i}\mu^{(i)}(k))_{i\in[N]}\quad x_{*}:=(m_{i}\mu_{*}^{(i)})_{i\in[N]} 
%\end{equation*}
notice that we can write:
%\begin{equation*}
%\sum_{i=1}^{N}m_{i}\inn{\mu^{(i)}(k)-\mu_{*}^{(i)}}{\pi^{(i)}(k)}
%=\inn{X(k)-x_{*}}{\pi(k)}.
%\end{equation*}
\begin{equation}
\label{Eq:aiaisjjshshshsss}
\sum_{i=1}^{N}m_{i}\inn{\mu^{(i)}(k)-\mu^{(i)}}{\pi^{(i)}(k)}=\inn{\mu(k)-\mu}{\widetilde{\M}^{T}\Lambda(k)}.
\end{equation}
%\begin{equation}
%\sum_{i=1}^{N}m_{i}\inn{\mu^{(i)}(k)-\mu_{*}^{(i)}}{\pi^{(i)}(k)}
%=\inn{X(k)-x_{*}}{\pi(k)}.
%\end{equation}

Combining all the previous observations,  we have by summing the resulting inequality over all $k=0,\ldots,n-1$, and by subsequent telescoping,
%This, \eqref{Eq:aoaoajsjsjhdhdjshshsss} and the triangle inequality yield:
%\begin{align*}
%\scriptstyle &F^{\mathbf{m}}(\mu,Y_{k+1})-F^{\mathbf{m}}(x,Y_{k})\leq-\gamma\inn{\mu^{(i)}(k)-\mu^{(i)}}{V(\mu(k))}\\
%&~~~-\gamma\sum_{i=1}^{N}\inn{m_{i}\mu^{(i)}(k)-m_{i}\mu^{(i)}}{\pi^{(i)}(k)}\\
%&-\gamma\sum_{i=1}^{N}m_{i}\inn{\mu^{(i)}(k)-\mu^{(i)}}{\xi^{(i)}(k+1)}\\
%%&+\gamma^{2}\sum_{i=1}^{N}m_{i}\norm{\hat{\ell}^{(i)}(k)+\pi^{(i)}(k)}_{\infty}^{2}\\
%&+\gamma^{2}\sum_{i=1}^{N}m_{i}\left(\norm{M^{(i),T}\Lambda(k)}_{\infty}^{2}+2(\norm{\ell^{(i)}(\mu(k))}_{\infty}^{2}+\norm{\xi_{k+1}}_{\infty}^{2})\right).
%\end{align*}
%By applying the identity \eqref{Eq:aohdhhddjjshshhddd}, summing the resulting inequality over all $k=0,\ldots,n-1$, and by subsequent telescoping, 
we obtain an upper bound for the cumulative difference $\mathcal{V}_{n}^{(1)}(\mu):=F^{\mathbf{m}}(\mu,Y(n))-F^{\mathbf{m}}(\mu,Y(0))$:
\begin{equation}
\label{Eq:aaoosjjssjhdhdhjjssh}
\begin{split}
\mathcal{V}_{n}^{(1)}(\mu)
\leq&-\gamma\sum_{k=0}^{n-1}\underbrace{\sum_{i=1}^{N}\underbrace{m_{i}\inn{\mu^{(i)}(k)-\mu^{(i)}}{\ell^{(i)}(\mu(k))}}_{=\inn{\mu^{(i)}(k)-\mu^{(i)}}{\nabla_{\mu^{(i)}(k)}V(\mu(k))}}}_{\inn{\mu(k)-\mu}{v(\mu(k))}}\\
&-\gamma\sum_{k=0}^{n-1}\inn{\mu(k)-\mu}{\widetilde{\M}^{T}\Lambda(k)}
\\
&+\gamma^{2}C_{1}^{2}\sum_{k=0}^{n-1}\norm{\Lambda(k)}^{2}_{2}+\gamma S_{n}+2\gamma^{2}R_{n}+2 C_{2}^{2}\gamma^{2}n
\end{split}
\end{equation}
where:
\begin{equation*}
S_{n}:=-\sum_{k=0}^{n-1}\inn{X(k)-x_{*}}{\xi(k+1)},~R_{n}:=m_{*}N\sum_{k=1}^{n}\norm{\xi(k)}^{2}_{\infty},
\end{equation*}
and where $v(\mu):=\nabla V(\mu)$, where $V$ denotes the Rosenthal potential:
\begin{equation}
\label{Eq:aaiiahshhsjhggfhhfggff}
V:\Delta\rightarrow\real,\quad \mu\mapsto \sum_{r\in\mathcal{R}}\int_{0}^{\phi_{r}(\mu)} \ell_{r}(u)\d u,
\end{equation}
%\todo[inline]{HC to ET: Added (-) in $S_n$ and N in $R_n$. Index shift in $R_n$}

We now estimate the evolution of the price vector by providing a bound for $\mathcal{V}_{n}^{(2)}(\lambda):=(\norm{\Lambda(n)-\lambda}_{2}^{2}-\norm{\Lambda(0)-\lambda}_{2}^{2})/2$, where $\lambda\geq 0$. By similar computations as before, 
and by the elementary bound $2\inn{\lambda-\Lambda(k)}{\Lambda(k)}\leq\norm{\lambda}^{2}_{2}-\norm{\Lambda(k)}^{2}_{2}$,
%\begin{equation*}
%2\inn{\lambda-\Lambda_{k}}{\Lambda_{k}}\leq\norm{\lambda}^{2}_{2}-\norm{\Lambda_{k}}^{2}_{2}.
%\end{equation*}
we obtain: 
\begin{equation}
\begin{split}
&\mathcal{V}_{n}^{(2)}(\lambda)\leq \beta\sum_{k=0}^{n-1}\inn{\Lambda(k)-\lambda}{\phi(\mu(k))-L}\\&+\tfrac{\alpha}{2}\sum_{k=0}^{n-1}(\norm{\lambda}_{2}^{2}-\norm{\Lambda(k)}_{2}^{2})
+\sum_{k=0}^{n-1}(\beta^{2}C_{3}^{2}+\alpha^{2}\norm{\Lambda(k)}_{2}^{2}).
\end{split}
\label{Eq:aaiaiajssjshhdddd}
\end{equation}

Combining the bounds \eqref{Eq:aaoosjjssjhdhdhjjssh} and \eqref{Eq:aaiaiajssjshhdddd},
, it holds:
\begin{equation}
\label{Eq:aiaiaiisjsjsss}
\begin{split}
&\mathcal{V}_{n}^{(1)}(\mu)+\mathcal{V}_{n}^{(2)}(\lambda)\\
&\leq-\gamma\sum_{k=0}^{n-1}\inn{z(k)-z}{\tilde{v}(z(k))}\\
& +(\beta-\gamma)\sum_{k=0}^{n-1}\inn{\Lambda(k)-\lambda}{\widetilde{\M}\mu(k)-L}\nonumber\\
&+(\gamma^{2}C_{1}^{2}-\frac{\alpha}{2}+\alpha^{2})\sum_{k=0}^{n-1}\norm{\Lambda(k)}_{2}^{2}\\
&+\left(2C_{2}^{2}\gamma^{2}+C_{3}^{2}\beta^{2}+\frac{\alpha \norm{\lambda}_{2}^{2}}{2}\right)n+\gamma S_{n}+2\gamma^{2}R_{n},
\end{split}
\end{equation}
where:
\begin{equation*}
\begin{split}
&z(k):=(\mu(k),\Lambda(k)),\quad z=(\mu,\lambda),\\&~\tilde{v}(z(k))=[\nabla V(\mu(k))+\widetilde{\M}^{T}\Lambda(k),L-\widetilde{\M}\mu(k)].
\end{split}
\end{equation*}
setting $\beta=\gamma$ and
%, it yields:
%\begin{align*}\mathcal{V}_{n}^{(1)}&+\mathcal{V}_{n}^{(2)}(\lambda)\leq
%-\gamma\sum_{k=0}^{n-1}\Theta(k)\\&-\left(\left[\gamma  \sum_{k=0}^{n-1}\inn{\lambda}{\phi(X(k))-c}\right] -\alpha n\tfrac{\norm{\lambda}^{2}_{2}}{2} \right) \nonumber\\
%&+(\gamma^{2}C_{1}^{2}-\frac{\alpha}{2}+\alpha^{2})\sum_{k=0}^{n-1}\norm{\Lambda(k)}_{2}^{2}\\
%&+\gamma^{2}\left(2C_{2}^{2}+C_{3}^{2}\right)n+\gamma S_{n}+2\gamma^{2}R_{n}.
%\end{align*}
$\alpha=\delta\gamma^{2}$ with $\delta\in (0,1/\gamma^{2})$ fulfilling \eqref{Eq:aiaiahhfggfdhdgdhdhdgsgss},
%\begin{equation*}
%(C_{1}^{2}+\gamma^{2}\delta^{2})-\frac{\delta}{2}\leq 0,
%\end{equation*}
we have:
\begin{equation}
\label{Eq:aaisjsshdhddgggddd}
\begin{split}
&\mathcal{V}_{n}^{(1)}(\mu)+\mathcal{V}_{n}^{(2)}(\lambda)\\
&\leq-\gamma\sum_{k=0}^{n-1}\underbrace{\inn{z(k)-z}{\tilde{v}(z(k))}}_{=:\Upsilon_{k}(z,z(k))}\\
&+\left((2C_{2}^{2}+C_{3}^{2})\gamma^{2}+\frac{\alpha \norm{\lambda}_{2}^{2}}{2}\right)n+\gamma S_{n}+2\gamma^{2}R_{n}.
\end{split}
\end{equation}
Notice that $v$ is monotone (for the definition of monotone operator see \ref{Subsec:Monotone}) since $V$ is convex. Thus by Proposition \ref{Prop:aiajshsgdggdgdgdhsgss}, we have that $\tilde{v}$ is also monotone implying:
\begin{equation*}
    \Upsilon_{k}(z,z(k))\geq \inn{z(k)-z}{\tilde{v}(z)}.
\end{equation*}
Moreover, by the slater's condition and KKT argumentations, we can find a Lagrangian dual optimizer $\lambda_{*}\in\real^{\mathcal{R}}_{\geq 0}$ corresponding to the minimizer $\mu_{*}$ of $V$ over $\mathcal{Q}:=\lrbrace{\mu\in\Delta:~\Gamma(\mu)\leq 0}$. It follows that $(\mu_{*},\lambda_{*})\in \text{SOL}(\X\times\real^{\mathcal{R}},\tilde{v})$, and consequently:
\begin{equation}
\label{Eq:aaooashshsgggssgss}
    \Upsilon_{k}(z_{*},z(k))\geq \inn{z(k)-z_{*}}{\tilde{v}(z_{*})}\geq 0,~z_{*}=(\mu_{*},\lambda_{*}).
\end{equation}
Setting this observation into \eqref{Eq:aaisjsshdhddgggddd}, we obtain:
\begin{equation}
\label{Eq:aaisjsshdhddgggddd2}
\begin{split}
&\mathcal{V}_{n}^{(1)}(\mu_{*})+\mathcal{V}_{n}^{(2)}(\lambda_{*})\\
&\leq\left((2C_{2}^{2}+C_{3}^{2})\gamma^{2}+\frac{\alpha \norm{\lambda_{*}}_{2}^{2}}{2}\right)n+\gamma S_{n}+2\gamma^{2}R_{n}.
\end{split}
\end{equation}

%\todo[inline]{HC to ET: Added $m_i$, and $m_{*}$}
%Now, since $\Lambda_{0}=0$, it holds $\mathcal{V}_{n}^{(2)}(\lambda_{*})\geq -\norm{\lambda_{*}}^{2}_{2}/2$.
%\begin{equation*}
%\mathcal{V}_{n}^{(2)}(\tilde{\lambda})\geq -\norm{\Lambda_{0}-\tilde{\lambda}}^{2}_{2}/2=-\norm{\tilde{\lambda}}^{2}_{2}/2.
%\end{equation*}
Now, since $Y_{0}=0$, we have:
\begin{equation*} 
\mathcal{V}^{(1)}_{n}(\mu_{*})\geq -\sum_{i=1}^{N} m_{i}\left( \max_{\Delta_{i}}\psi_{i}-\min_{\Delta_{i}}\psi_{i}\right) \geq -m_{*}\sum_{i=1}^{N}\ln(\abs{\mathfrak{P}_{i}}),
\end{equation*}
%\begin{align*}
%\mathcal{V}_{n}(z_{*})
%%&\geq -\overline{F}(x_{*},Y_{0})-(\norm{\Lambda_{0}-\lambda_{*}}^{2}_{2}/2)\\
%%&=-\frac{\sum_{i=1}^{N}\psi_{i}(x_{*})-\psi_{i}^{*}(0)}{N}-(\norm{\lambda_{*}}^{2}_{2}/2)\\
%&\geq -\overline{\Delta\psi(\X)} -(\norm{\lambda_{*}}^{2}_{2}/2),
%\end{align*}
and thus $\mathcal{V}^{(1)}_{n}(\mu_{*})\geq -\Delta\psi^{2}/2$.
Combining this observation with \eqref{Eq:aaisjsshdhddgggddd2} and using $\Lambda_0 = 0$, we obtain that:
\begin{equation}
\label{Eq:aiaiashshshgdffsss}
\begin{split}
& \tfrac{\norm{\Lambda(n)-\lambda_{*}}_{2}^{2}}{2}
\\
&\leq  \tfrac{\Delta\psi^{2}}{2}+\underbrace{\tfrac{\norm{\Lambda(0)-\lambda_{*}}_{2}^{2}}{2}}_{\tfrac{\norm{\lambda_{*}}_{2}^{2}}{2}}+
\left((2C_{2}^{2}+C_{3}^{2})\gamma^{2}+\frac{\alpha \norm{\lambda_{*}}_{2}^{2}}{2}\right)n\\
&+\gamma S_{n}+2\gamma^{2}R_{n}\\
&=\tfrac{\Delta\psi^{2}}{2}+(1+\alpha n)\tfrac{\norm{\lambda_{*}}_{2}^{2}}{2}+
(2C_{2}^{2}+C_{3}^{2})\gamma^{2}n\\
&+\gamma S_{n}+2\gamma^{2}R_{n}
\end{split}
\end{equation}
Since $S_{n}$ is a martingale with $\Erw[S_{1}]=0$, we have by taking the expectation (and noticing $\Erw[S_{n}]=0$), the desired result.
%\todo[inline]{HC to ET: Theorem correction done!}
%\begin{align*}
%	\sum_{k=0}^{n-1}\Theta_{k}(z_{*})&+\frac{\norm{\sum_{k=0}^{n-1}(\mathbf{A}X_{k}-b)}_{2}^{2}}{2\left(\alpha\gamma n+\tfrac{1}{\gamma}\right)}\leq  \tfrac{\overline{\Delta\psi(\X)}}{\gamma}
%	+\tilde{C}_{1}\gamma n\\
%	&+\tfrac{(1+\gamma^{2}\alpha n)}{\gamma}\tfrac{\norm{\lambda_{*}}^{2}_{2}}{2}+ S_{n}(x_{*})+\tfrac{2\gamma}{K}R_{n}.
%\end{align*}
%Since $S_{n}(x_{*})$ is a martingale with $\Erw[S_{1}(x_{*})]=0$, we have by taking the expectation:
%\begin{align*}
%\sum_{k=0}^{n-1}\Erw[\Theta(k)]&+\tfrac{\Erw[\norm{\left[ \sum_{k=0}^{n-1}\Gamma(k)\right]_{+}}_{2}^{2}]}{2\left(\delta \gamma n+\tfrac{1}{\gamma}\right)}\leq  \tfrac{\overline{\Delta\psi(\X)}}{\gamma}
%+\tilde{C}_{1}\gamma n\\
%&+2\gamma\sum_{k=1}^{n}\Erw[\norm{\xi(k)}_{\infty}^{2}].
%\end{align*}
\end{proof}
\subsection{Proof of consequences of the main result}
\begin{proof}[Proof of Corollary \ref{Corr:aiaahssgsgshhsssddd}]
Jensen's and triangle inequality asserts that:
\begin{equation*}
\sqrt{\Erw\left[\norm{\Lambda(n)-\lambda_{*}}_{2}^{2}\right]}\geq\Erw\left[\norm{\Lambda(n)-\lambda_{*}}_{2}\right]\geq \Erw\left[\norm{\Lambda(n)}_{2}\right]-\norm{\lambda_{*}}_{2}
\end{equation*}
Applying this to \eqref{Eq:aiaiahssgsghshhsgsgs} and by the persistence of the noise, we obtain \eqref{Eq:aiaiahssgsghshhsgsgs2}.

For any $k\in [n]$, we have by Corollary \ref{Corr:aiaahssgsgshhsssddd}:
\begin{equation*}
\begin{split}
\Erw\left[\norm{\Lambda(k)}_{2}\right]
\leq& \Delta\psi+(1+\sqrt{(1+\delta\gamma^{2} n)})\norm{\lambda_{*}}_{2}\\
&+
(\tilde{C}_{1}+\sigma)\gamma\sqrt{n}.
\end{split}
\end{equation*}
Now, setting our choices of parameters into \eqref{Eq:aiaiahssgsghshhsgsgs2}, it yields:
\begin{equation*}
\begin{split}
\Erw\left[\norm{\Lambda(k)}_{2}\right]
\leq& \Delta\psi+(1+\sqrt{(1+\delta c^{2})})\norm{\lambda_{*}}_{2}\\
&+
(\tilde{C}_{1}+\sigma)c=A.
\end{split}
\end{equation*}
Consequently:
\begin{equation}
\label{Eq:aiaishshssgdgdgddd}
\begin{split}
\tfrac{\alpha}{\beta}\Erw\left[\sum_{k=0}^{n-1}\norm{\Lambda(k)}_{2}\right]&=\tfrac{\delta c}{\sqrt{n}}\sum_{k=1}^{n-1}\Erw\left[\norm{\Lambda(k)}_{2}\right]\\
&\leq \tfrac{\delta c (n-1)}{\sqrt{n}} A\leq \delta c A\sqrt{n}.
\end{split}
\end{equation}
Moreover, we have $\Erw\left[\norm{\Lambda(n)}_{2}\right]/\beta\leq A\sqrt{n}/c$.
Setting this observation and \eqref{Eq:aiaishshssgdgdgddd} into \eqref{Eq:aaisjjshshsgdgdd}, we have the remaining statement.
\end{proof}

\end{document}